\documentclass[11pt]{article}% APA Reference Style 
\usepackage[a4paper,margin=1in]{geometry}
\usepackage{graphicx}%
\usepackage{multirow}%
\usepackage{amsmath,amssymb,amsfonts}%
\usepackage{amsthm}%
\usepackage{mathrsfs}%
\usepackage[title]{appendix}%
\usepackage{xcolor}%
\usepackage{textcomp}%
\usepackage{manyfoot}%
\usepackage{booktabs}%
\usepackage{algorithm}%
\usepackage{algorithmicx}%
\usepackage{algpseudocode}%
\usepackage{listings}%
\usepackage{mathtools}
\usepackage{multirow}
\usepackage {booktabs}
\usepackage{lscape}
\usepackage{url}
\usepackage[hang,small,bf]{caption}
\usepackage[subrefformat=parens]{subcaption}
\usepackage{amsthm}
\usepackage{rotating}
\usepackage{enumerate}%[A1]
\newtheorem{theorem}{Theorem}%  meant for continuous numbers
\newtheorem{remark}{Remark}%

\newtheorem{definition}{Definition}%

\usepackage[authoryear,round]{natbib}
\newcommand{\E}{\mathrm{E}}

\newcommand{\wtk}{\mathrm{WTK}}
\newcommand{\wsk}{\mathrm{WSK}}
\newcommand{\MISE}{\mathrm{MISE}}
\newcommand{\ISB}{\mathrm{ISB}}
\newcommand{\IV}{\mathrm{IV}}
\newcommand{\cv}{\mathrm{CV}}
\newcommand{\er}{\mathrm{ER}}
\newcommand{\sgn}{\mathrm{sgn}}
\newcommand{\argmin}{\mathop{\rm arg~min}\limits}

\newcommand{\floor}[1]{\lfloor #1 \rfloor }
\begin{document}

\title{Wrapped flat-top kernel density estimation with circular data}

\author{Yasuhito Tsuruta\thanks{
School of Business Administration, Meiji University,
1-1 Kanda-Surugadai, Chiyoda-ku, Tokyo 101-8301, Japan.
Email: tsuruta\_y@meiji.ac.jp
}}

\date{}  % 日付を消す（任意）

\maketitle

%%==================================%%
%% Sample for unstructured abstract %%
%%==================================%%

\abstract {Kernel density estimators with circular data have been studied extensively for decades, as they allow flexible estimations even when the shape of the underlying density is complex. Many recent studies have examined bias correction methods; however, these methods are limited by the order when trying to improve the convergence rate of the bias, even if the true density is sufficiently smooth. To overcome this limitation,  the present study considers a new bias correction approach based on the characteristic functions of the underlying circular density. We introduce wrapped flat-top kernels, which are generated by wrapping the standard flat-top kernels defined on the real line onto the circumference of a unit circle. The asymptotic mean squared errors of the wrapped flat-top kernel density estimators are then derived. The results show that the convergence rate of these estimators is faster than that of previously introduced estimators. Furthermore, wrapped flat-top kernel density estimators achieve $\sqrt{n}$-consistency under the characteristic function of finite support, such as the circular uniform and cardioid distributions. We confirm these theoretical results in the numerical experiments. In empirical analyses, we also show that wrapped flat-top kernel density estimators effectively capture the shape of data. Therefore, such estimators are expected to allow flexible and accurate estimation in circular data analysis.}

\noindent\textbf{Keywords:} Circular data; Kernel density estimation; Flat-top kernel; Fourier series; Bias reduction; Mean integrated squared error.
%%\pacs[JEL Classification]{D8, H51}

%\pacs[MSC Classification]{62G07, 62H11}

%%===================================================%%
%% For presentation purpose, we have included        %%
%% \bigskip command. Please ignore this.             %%
%%===================================================%%
\bigskip   
\section{Introduction}
If we consider circular data $\{\Theta_{1},\dots, \Theta_{n}\}$ generated by random sample from the underlying circular density $f(\theta)$ for $\theta \in [0,2\pi)$ with $\lim_{\theta \to 2\pi}f(\theta) \to f(0)$, typical examples are wind direction, paleocurrent direction, animal orientation, wildfire orientation, and the angles between certain atoms of amino acids in proteins \citetext{\citealp[Chapter 1]{mardia2000directional}; \citealp[Chapter 1]{Ley2017}}.

Kernel density estimators, which allow flexible estimations of such circular data, have been actively studied for approximately four decades \citetext{\citealp{beran1979exponential};  \citealp{hall1987kernel};  \citealp{bai1988kernel};  \citealp{taylor2008automatic}}. A circular kernel density estimator is defined as
\begin{align*}
\hat{f}_{h}(\theta) = \frac{1}{n}\sum_{i=1}^{n}K_{h}(\theta- \Theta_{i}),
\end{align*}
where $K_{h}(\theta)$ is a circular kernel function with bandwidth $h>0$. Throughout this manuscript, bandwidth $h$ is treated as a general smoothing
parameter that controls the concentration of the kernel function around its mean direction:
lower values of $h$ produce more concentrated kernel density estimators, while higher values yield
smoother estimators. In the real-valued setting, this coincides with the usual kernel function
rescaling $K_h(x)=h^{-1}K(x/h)$. In the circular setting, however, kernels are not
obtained from the literal rescaling of the argument; instead, $h$ controls the concentration of the kernel function through its functional form.

The integrated squared error (ISE) and mean integrated squared error (MISE) have often been employed as the global error criterion of kernel density estimators. These are defined as $\mathrm{ISE}[\hat{f}_{h}(\theta)] :=\int^{2\pi}_{0}[\hat{f}_{h}(\theta) -f(\theta)]^{2}d\theta$ and $\MISE[\hat{f}_{h}(\theta)]:= \E[\mathrm{ISE}[\hat{f}_{h}(\theta)]]$, respectively. By employing the minimizer $h_{*}$ for the MISE,  the convergence rate of the optimal MISE is $O(n^{-4/5})$ when we employ the non-negative circular kernel class proposed by \cite{ hall1987kernel}. Its rate when employing the wrapped Cauchy kernel is $O(n^{-2/3})$ \citep{tsuruta2017asymptotic, tenreiro2022kernel}. Furthermore, \cite{ameijeiras2024reliable} investigates the asymptotic MISE for wrapped kernels, such as a wrapped normal kernel.

Employing $p$th-order kernel functions, which do not satisfy non-negativity, reduces the bias of the MISE. Moreover, this approach ensures that the convergence rate of its optimal MISE is $O(n^{-2p/(2p+1)})$, under the assumption that the underlying circular density $f$ is $(p+2)$-th continuously differentiable \citep{tsuruta2017higher, tsuruta2024bias}. As \cite{politis1999multivariate} note, the convergence rate of the MISE is limited by the order of the kernel if the underlying density $f$ is sufficiently smooth. Although some non-negative bias correction methods have been proposed \citep{tsuruta2017higher,  bedouhene2022nonparametric,  tsuruta2024bias}, these estimators provide an MISE with the same order as the fourth-order kernel density estimators. Therefore, although these bias correction methods were inspired by the generalized jackknifing method for kernel density estimators on the real line proposed by \cite{jones1993generalized}, they have the same limitation as $p$th-order kernels.

We consider generating a kernel function achieving the order of the MISE as $O(n^{-2p/(2p+1)})$ desirable for any smoothness order $p$ of $f$. Thus, we must consider an infinity-order kernel function.
For data on the real line, many studies have discussed other bias correction methods based on the Fourier transform to generate an infinity-order kernel function \citetext{see \citealp{watson1963estimation}; \citealp{davis1977mean}; \citealp{devroye1992note}; \citealp{politis1999multivariate}}. \cite{devroye1992note} proposes a flat-top kernel function derived from the flat-top shape of the characteristic function. Employing a flat-top kernel function provides that the convergence rate of the mean squared error (MSE) is $O(n^{-1}\log n)$ when the characteristic function ofthe underlying density $f$ decays exponentially, and its rate is $O(n^{-1})$ when the characteristic function has finite support \citep{politis1999multivariate}. \cite{wang2022bias} argue that a transformed flat-top series estimator can reduce the bias of a density under compact support, although they focus on its local error.

While flat-top kernel functions on the real line are effective bias correction methods, few studies discuss flat-top kernel functions on the circle. \cite{politis1995bias} and \cite{politis2003adaptive} discuss flat-top kernel density estimators with the general trapezoidal shape window for a spectral density estimation. Furthermore, \cite{politis2001nonparametric} argues that a flat-top kernel function on the circle is obtained by wrapping it on the real line around the circumference of a unit circle. We refer to a kernel provided by using such a wrapping method as a wrapped flat-top kernel (see the next section for the definition). This study aims to investigate the asymptotic properties (e.g., the MISE) and small sample behaviors of selected wrapped-flat top kernel functions. 

The remainder of this paper is organized as follows. Section \ref{sec: theories} discusses the asymptotic properties of wrapped flat-top kernel density estimators. Section \ref{sec: numerical} investigates the small sample characteristics of wrapped flat-top kernel density estimators and compares these estimators and the previously introduced estimators in a numerical experiment.  Section \ref{sec: application} presents the result of empirical analyses using the wrapped flat-top kernel density estimators. Finally, Section \ref{sec: conclusion} concludes.  The proofs of the theorems are in the Appendix.
\section{Properties of wrapped flat-top kernel functions}\label{sec: theories}
We denote the characteristic function of circular density $f$ as $\phi_{t}(f): = \int^{2\pi}_{0}f(\theta)e^{it\theta}d\theta$, where the complex conjugate of $\phi_{t}(f)$ with $\bar{\phi}_{t}(f) = \phi_{-t}(f).$ We assume that any circular kernel function satisfies $\int^{\pi}_{-\pi}K_{h}(\theta) d\theta= 1$ with the following Fourier series
\begin{align*}
K_{h}(\theta) = \frac{1}{2\pi}\sum_{t\in \mathbb{Z}}\phi_{t}(K_{h})e^{-it\theta},%\label{eq: KF},
\end{align*}
where $\mathbb{Z}$ is the set of integers. Therefore, a kernel density estimator can be rewritten as
\begin{align*}
\hat{f}_{h}(\theta) = \frac{1}{n}\sum_{j=1}^{n}\sum_{t\in \mathbb{Z}}\phi_{t}(K_{h})e^{-it\theta}e^{it\Theta_{j}}.
\end{align*}
Any integrable function on the circle satisfies Parseval's identity \citep[p. 80]{ stein2011fourier}.  Therefore, we obtain the exact MISE expression as follows:
\begin{theorem}\label{theo: MISE}
For circular density $f$, it holds that 
\begin{align}
\MISE[\hat{f}_{h}(\theta)]&= \ISB[\hat{f}_{h}(\theta)] +  \IV[\hat{f}_{h}(\theta)],\notag
\shortintertext{where}
 \ISB[\hat{f}_{h}(\theta)] &=\frac{1}{2\pi}\sum_{t \in \mathbb{Z}}|\phi_{t}(f)|^{2}|1 - \phi_{t}(K_{h})|^{2},\notag
\shortintertext{and}
  \IV[\hat{f}_{h}(\theta)]&=  \frac{1}{2\pi n}\sum_{t \in \mathbb{Z}}|\phi_{t}(K_{h})|^{2}(1 - |\phi_{t}(f)|^{2}), \notag %\label{eq: MISE}
\end{align}
are called the integrated square bias (ISB) and integrated variance (IV), respectively.
\end{theorem}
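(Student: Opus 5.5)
Our plan is to compute the Fourier coefficients of $\hat{f}_{h}$ and then apply Parseval's identity. We use the convention $K_{h}(\theta)=\frac{1}{2\pi}\sum_{t}\phi_{t}(K_{h})e^{-it\theta}$. Then the estimator has the expansion
\[
\hat{f}_{h}(\theta)=\frac{1}{2\pi}\sum_{t\in\mathbb{Z}}\hat{\phi}_{t}\,\phi_{t}(K_{h})e^{-it\theta},
\qquad
\hat{\phi}_{t}:=\frac{1}{n}\sum_{j=1}^{n}e^{it\Theta_{j}}.
\]
The $1/(2\pi)$ factor is the one coming from the Fourier series of $K_{h}$. Likewise $f(\theta)=\frac{1}{2\pi}\sum_{t}\phi_{t}(f)e^{-it\theta}$. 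Parseval's identity on $[0,2\pi)$ with these normalizations gives
\[
\mathrm{ISE}[\hat{f}_{h}]=\frac{1}{2\pi}\sum_{t}\bigl|\hat{\phi}_{t}\phi_{t}(K_{h})-\phi_{t}(f)\bigr|^{2}.
\]
We then take expectations term by term. This is justified by Tonelli's theorem, since all summands are nonnegative.

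The key step is a bias--variance split at each frequency $t$. The empirical characteristic function is unbiased, $\E[\hat{\phi}_{t}]=\phi_{t}(f)$. It is an average of $n$ i.i.d.\ terms of modulus one, so
\[
\E|\hat{\phi}_{t}-\phi_{t}(f)|^{2}=\frac{1}{n}\bigl(\E|e^{it\Theta}|^{2}-|\phi_{t}(f)|^{2}\bigr)=\frac{1}{n}\bigl(1-|\phi_{t}(f)|^{2}\bigr).
\]
Now write
\[
\hat{\phi}_{t}\phi_{t}(K_{h})-\phi_{t}(f)=\phi_{t}(K_{h})\bigl(\hat{\phi}_{t}-\phi_{t}(f)\bigr)-\phi_{t}(f)\bigl(1-\phi_{t}(K_{h})\bigr).
\]
When we expand the squared modulus, the cross term has zero mean. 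Hence
\[
\E\bigl|\hat{\phi}_{t}\phi_{t}(K_{h})-\phi_{t}(f)\bigr|^{2}=|\phi_{t}(f)|^{2}|1-\phi_{t}(K_{h})|^{2}+\frac{1}{n}|\phi_{t}(K_{h})|^{2}\bigl(1-|\phi_{t}(f)|^{2}\bigr).
\]
Summing over $t$ with the factor $1/(2\pi)$ yields $\ISB+\IV$ exactly as stated.

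The main obstacle is bookkeeping rather than analysis, and it has three parts.
\begin{enumerate}
\item We must keep the $2\pi$ normalizations consistent. The displayed formula for $\hat{f}_{h}$ in the text omits the $1/(2\pi)$; we will reinstate it from the Fourier series of $K_{h}$.
\item We must ensure Parseval applies. This needs $f$ and $K_{h}$ to be square integrable, so the coefficient sequences lie in $\ell^{2}$. For $K_{h}$ we take this as the standing assumption. If $f\notin L^{2}$, both sides are $+\infty$ and the identity holds trivially, since the $\ISB$ sum then diverges provided $\phi_{t}(K_{h})\to0$.
\item We must handle complex conjugation correctly in the cross term, using $\bar{\phi}_{t}(f)=\phi_{-t}(f)$.
\end{enumerate}
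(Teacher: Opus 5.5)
Your proposal is correct and follows essentially the same route as the paper. Parseval's identity turns the ISE into $\frac{1}{2\pi}\sum_{t}|\hat{\phi}_{t}\phi_{t}(K_{h})-\phi_{t}(f)|^{2}$, and each frequency is then handled through the first and second moments of the empirical characteristic function; the paper expands the square and evaluates the three expectations directly instead of centering at the mean, which amounts to the same computation. Your side remarks are correct refinements that the paper leaves implicit: the missing factor $1/(2\pi)$ in the paper's displayed expansion of $\hat{f}_{h}$, the Tonelli justification for exchanging expectation and sum, and the $f\notin L^{2}$ case.
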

A flat-top kernel function $K_{h,c}(x)$ is defined as a kernel function on the real line with the characteristic function $\phi(t; K_{h,c}):= \int_{x\in \mathbb{R}}K_{h,c}(x)e^{itx}dx$ for $t \in \mathbb{R}$, such that
\begin{align*}
\phi(t; K_{h,c}) := 
\begin{cases}
1 & \text{if}\quad |t|\le 1/h,\\
g(t; h^{-1},c) & \text{if}\quad  1/h < |t|< c/h,\\
0 & \text{if}\quad  |t|\ge c/h,
\end{cases}
\end{align*}
where where $g(t; h^{-1},c)$ is the real-valued function, $0 < |g(t; h^{-1},c)| < 1$, and $g(t; h^{-1},c) = g(-t; h^{-1},c)$ with any real number $c\ge1$.  the function $g(t; h^{-1},c)$ is usually continuous \citep{politis1999multivariate}. 

We introduce the floor function $\floor{x} :=\max \{ z \in \mathbb{Z}\mid z\le x\}$.
We can wrap a flat-top kernel function on the real line with bandwidth $1/\floor{\nu}$ around the circumference of a unit circle. This motivates the definition of a wrapped flat-top kernel function as follows.
\begin{definition}
Let $c$ be any integer with $c\ge1$ and  $\nu \in [0, \infty)$ be a smoothing parameter. Futhermore,  the real-valued function $g(t; \nu,c)$ satisfies $0 < |g(t;\nu,c)| <1$ and $g(t; h^{-1},c) = g(-t; h^{-1},c)$
We define the wrapped flat-top kernel function $K_{\nu,c}$ as a kernel function whose characteristic function is given by
\begin{align}
\phi_{t}(K_{\nu,c}) := 
\begin{cases}
1 & \text{if}\quad |t|\le \floor{\nu},\\
g(t; \nu ,c) & \text{if}\quad  \floor{\nu} < |t|< c\floor{\nu},\\
0 & \text{if}\quad  |t|\ge c\floor{\nu},
\end{cases}\label{eq; chrWFTK}
\end{align}
if $\nu > 0$. If $\nu=0$, its characteristic function is given by
\begin{align}
\phi_{t}(K_{0,c}) := 
\begin{cases}
1 & \text{if}\quad |t|= 0,\\
0 & \text{if}\quad |t| >0.
\end{cases}\label{eq; chrWFTK}
\end{align}
\end{definition}
To guarantee the differentiability of $\MISE[\hat{f}_{h}(\theta)]$ at $\nu$, the support of $\nu$ must be the positive real line. In practice, statisticians may choose any smoothing parameter from a set of non-negative integers, as different choices of $\nu$ yield different kernel density estimators.

We obtain a wrapped flat-top kernel function with $\phi_{t}(K_{\nu,c}) = \phi(t; K_{ 1/\floor{\nu},c})$, where $\phi(t; K_{ 1/\floor{\nu},c})$ is the characteristic function of any flat-top kernel function on the real line with $\floor{\nu} >0$. For instance, we obtain the wrapped trapezoid kernel function from the trapezoid kernel function, which is given by
\begin{align*}
K_{\nu, c}^{\wtk}(\theta) &:= \begin{dcases}
\frac{1}{2\pi}& \text{if} \quad \nu =0, \quad \\
 \frac{\sin^{2}(c\floor{\nu} \theta/2) - \sin^{2}(\floor{\nu} \theta/2)}{2\pi (c-1)\floor{\nu}\sin^{2}(\theta/2)}& \text{if} \quad \nu >0,
\end{dcases}
\shortintertext{with}
g^{\wtk}(t; \nu ,c) &:= \frac{c- |t|/\floor{\nu}}{c-1}.
\end{align*}
For $\nu >0$, $ K_{\nu, c}^{\wtk}(0) = (2\pi)^{-1} (c+1)\floor{\nu}$.
The wrapped trapezoid kernel function is also obtained by the reparametrization of the de la vall\'{e}e Poussin kernel \citep{mehta2015l1}.  When $c = 1$, we obtain the wrapped sinc kernel function as follows:
\begin{align*}
K_{\nu, 1}^{\wsk}(\theta) &:=
 \frac{\sin((\floor{\nu}+1/2)\theta)}{2\pi \sin(\theta/2)}
\end{align*}
where $\phi_{t}(K_{\nu,1}^{\wsk}) =1$ if $|t|\le \floor{\nu}$, and 0 otherwise. Notably, $K_{\nu, 1}^{\wsk}(0) = (2\pi)^{-1} (2\floor{\nu} +1)$. In  Fourier analysis, the wrapped sinc kernel function is often called the Dirichlet kernel. Figure \ref{fig: wfk} plots these kernel functions.
We denote the wrapped flat-top kernel density estimator as $\hat{f}_{\nu,c}(\theta):= n^{-1}\sum_{i}K_{\nu,c}(\theta - \Theta_{i})$.  

\begin{figure}[htbp]
    \begin{tabular}{cc}
        \begin{minipage}[t]{0.40\hsize}
        \centering
        \includegraphics[keepaspectratio, scale=0.38]{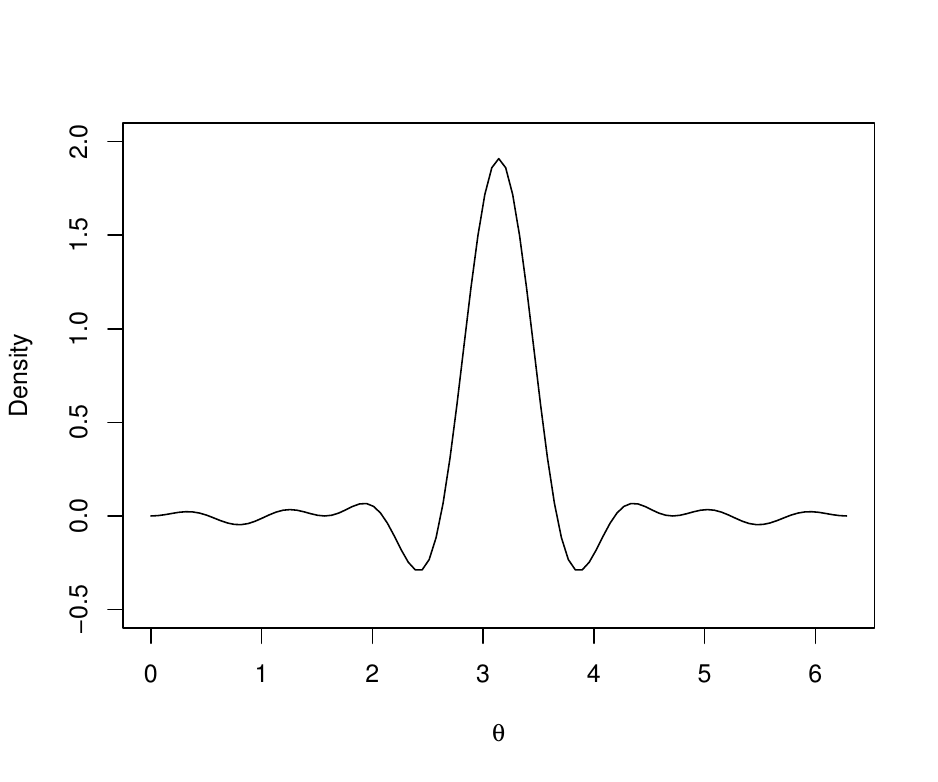}
        \subcaption{Wrapped trapezoid kernel function with $\nu=4$, $c=2$, and mean direction $\mu=\pi$.}
        \label{fig: wtk}
      \end{minipage}&
      \begin{minipage}[t]{0.40\hsize}
        \centering
        \includegraphics[keepaspectratio, scale=0.38]{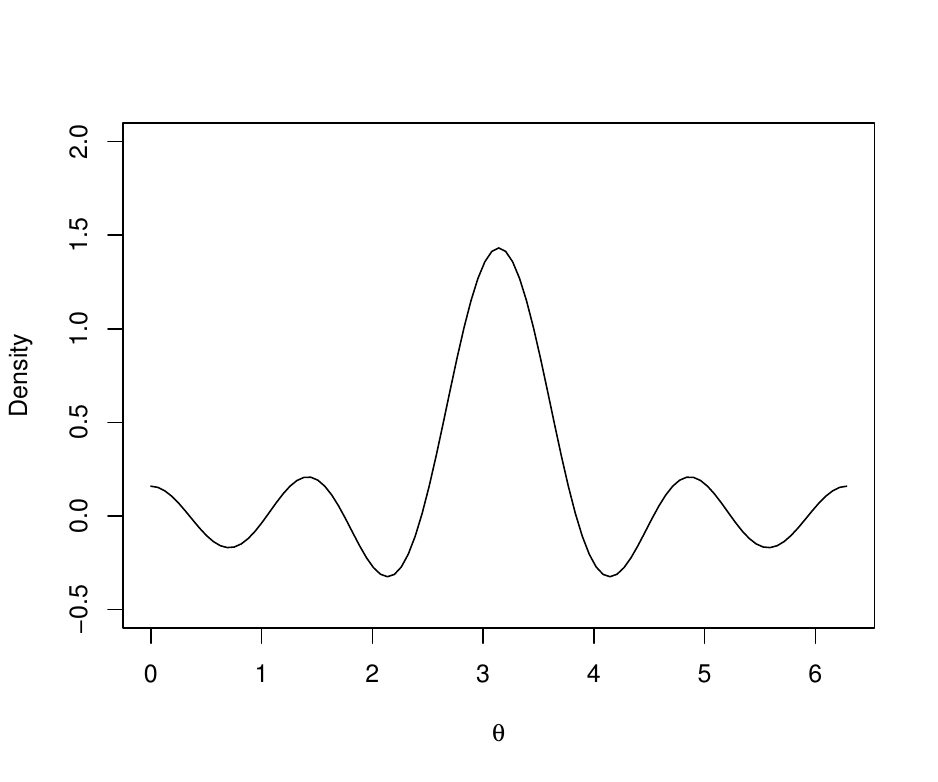}
        \subcaption{Wrapped sinc kernel function with $\nu=4$ and mean direction $\mu=\pi$.}
        \label{fig: wsk}
      \end{minipage} \\
        \begin{minipage}[t]{0.40\hsize}
        \centering
        \includegraphics[keepaspectratio, scale=0.38]{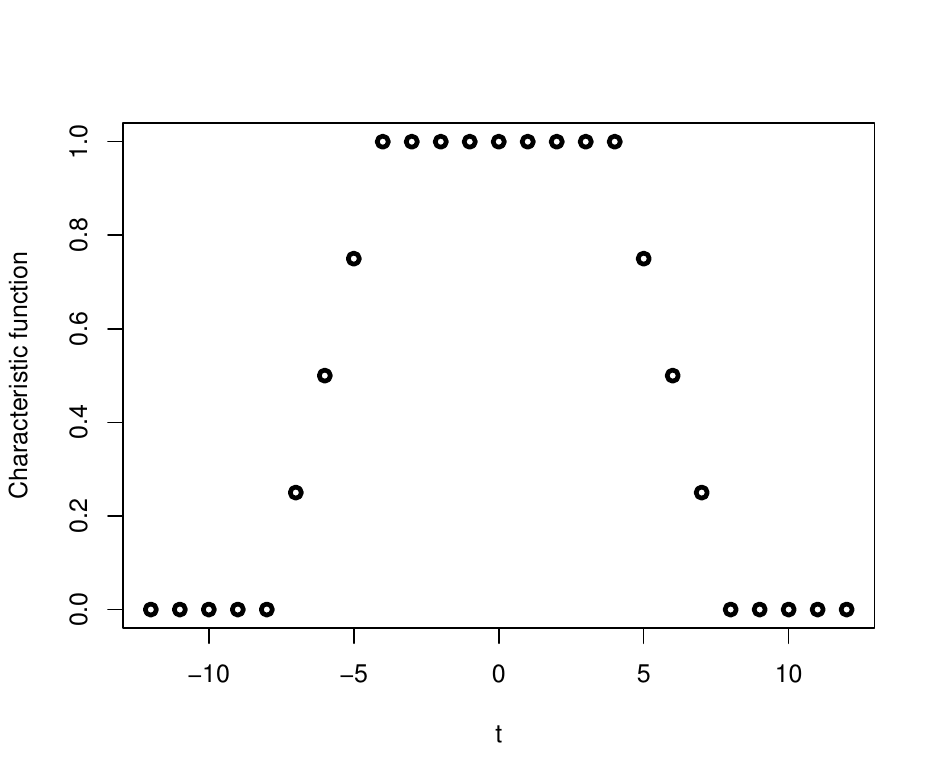}
        \subcaption{Characteristic function of wrapped trapezoid kernel with $\nu=4$ and $c=2$.}
        \label{fig: c_wtk}
      \end{minipage} &
      \begin{minipage}[t]{0.40\hsize}
        \centering
        \includegraphics[keepaspectratio, scale=0.38]{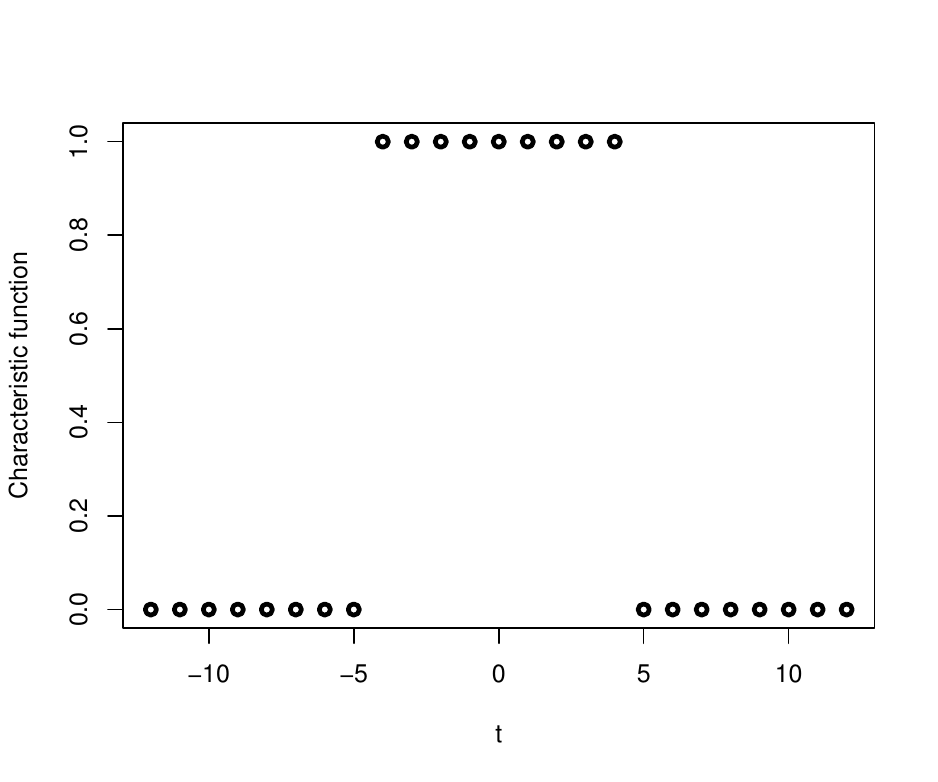}
        \subcaption{Characteristic function of wrapped sinc kernel function with $\nu=4$.}
        \label{fig: c_wsk}
      \end{minipage}
    \end{tabular}
     \caption{Wrapped flat-top kernel functions.}
\label{fig: wfk}
  \end{figure}
  
Wrapped flat-top kernel density estimators have the following exact MISE:
\begin{align}
\ISB[\hat{f}_{\nu,c}(\theta)] &=\frac{1}{2\pi}\left[\sum_{ \floor{\nu} < |t| < c \floor{\nu}}|\phi_{t}(f)|^{2}|1 - \phi_{t}(K_{\nu,c})|^{2}+ \sum_{c \floor{\nu} \le |t|}|\phi_{t}(f)|^{2}\right] ,\label{eq: ISB0}
\shortintertext{and}
\IV[\hat{f}_{\nu,c}(\theta)]&=\frac{1}{2\pi n}\sum_{ |t| < c \floor{\nu}}|\phi_{t}(K_{\nu,c})|^{2}(1 - |\phi_{t}(f)|^{2}).\label{eq: IV0}
\end{align}
We obtain the order of the IV for a wrapped flat-top kernel density estimator as follows.
\begin{theorem}\label{theo: IVwftk}
When we employ a wrapped flat-top kernel function, for circular density $f$, it holds that
\begin{align*}
\IV[\hat{f}_{\nu,c}(\theta)]\le \frac{c\nu}{\pi n}.
\end{align*}
\end{theorem}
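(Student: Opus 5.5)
We start from the exact expression (\ref{eq: IV0}) for the integrated variance of a wrapped flat-top kernel density estimator, which follows from Theorem \ref{theo: MISE} because $\phi_{t}(K_{\nu,c})=0$ for $|t|\ge c\floor{\nu}$. The plan is to bound each summand crudely and then count the terms. By the definition of the wrapped flat-top kernel, $|\phi_{t}(K_{\nu,c})|\le 1$ for every $t$: it equals $1$ on $|t|\le\floor{\nu}$, satisfies $0<|g(t;\nu,c)|<1$ on the transition band, and is $0$ beyond it. Moreover $0\le 1-|\phi_{t}(f)|^{2}\le 1$, because $|\phi_{t}(f)|\le\int_0^{2\pi} f(\theta)\,d\theta=1$ for a density $f$. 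Hence every summand in (\ref{eq: IV0}) is at most $1$.

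It remains to count the nonzero summands. For $\nu>0$ with $\floor{\nu}\ge 1$, the index set $\{t\in\mathbb{Z}: |t|<c\floor{\nu}\}$ has exactly $2c\floor{\nu}-1$ elements. We can sharpen slightly by noting that the $t=0$ term vanishes, since $\phi_{0}(f)=1$. Either way,
\begin{align*}
\IV[\hat{f}_{\nu,c}(\theta)] \le \frac{2c\floor{\nu}-1}{2\pi n}\le \frac{2c\nu}{2\pi n}=\frac{c\nu}{\pi n},
\end{align*}
using $\floor{\nu}\le\nu$.

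The only step that needs care is the degenerate range $0\le\nu<1$, where $\floor{\nu}=0$; this is the main (if minor) obstacle. There the characteristic function is $1$ at $t=0$ and $0$ elsewhere, which is consistent with the $\nu=0$ case of the definition. So the only possibly nonzero term is $t=0$, and it equals $1-|\phi_{0}(f)|^{2}=0$. The integrated variance is therefore $0\le c\nu/(\pi n)$, and the bound holds for all $\nu\ge0$.
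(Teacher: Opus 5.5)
Your argument is essentially the paper's own. The paper bounds $\frac{1}{2\pi}\sum_{|t|<c\floor{\nu}}|\phi_{t}(K_{\nu,c})|^{2}$ by $(2c\floor{\nu}-1)/(2\pi)\le c\nu/\pi$ and then discards the nonnegative term $\frac{1}{2\pi n}\sum_{t}|\phi_{t}(f)|^{2}|\phi_{t}(K_{\nu,c})|^{2}$; this amounts to your termwise bound $1-|\phi_{t}(f)|^{2}\le 1$ followed by counting. Your separate treatment of $\floor{\nu}=0$ is more careful than the paper's. The paper's intermediate bound $\int_{0}^{2\pi}K_{\nu,c}(\theta)^{2}\,d\theta\le c\nu/\pi$ is false for $\nu<1/(2c)$, where the left side equals $1/(2\pi)$.

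For $c\ge 2$ your proof is complete, but the case $c=1$ (the wrapped sinc kernel) needs a repair. There the first and third cases of the definition overlap at $|t|=\floor{\nu}$, and the paper explicitly sets $\phi_{t}(K_{\nu,1})=1$ for all $|t|\le\floor{\nu}$. So your justification of \eqref{eq: IV0}, namely ``$\phi_{t}(K_{\nu,c})=0$ for $|t|\ge c\floor{\nu}$'', fails at $|t|=\floor{\nu}$. The number of nonzero summands is then $2\floor{\nu}+1$, not $2c\floor{\nu}-1=2\floor{\nu}-1$. Your intermediate bound $(2\floor{\nu}-1)/(2\pi n)$ is therefore false: for the circular uniform density the integrated variance equals $\floor{\nu}/(\pi n)$. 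Nor does ``either way'' hold, because without discarding $t=0$ you only get $(2\floor{\nu}+1)/(2\pi n)$, which exceeds $\nu/(\pi n)$ whenever $\nu-\floor{\nu}<1/2$.

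The fix is already in your text: for $c=1$ the observation $\phi_{0}(f)=1$ is essential, not optional. It leaves $2\floor{\nu}$ summands, each at most one, so $\IV[\hat{f}_{\nu,1}(\theta)]\le\floor{\nu}/(\pi n)\le\nu/(\pi n)$. The paper's proof has the same slip: its step $1+2\floor{\nu}+\sum_{\floor{\nu}<|t|<c\floor{\nu}}g(t;\nu,c)^{2}\le 1+2(c\floor{\nu}-1)$ reads $2\floor{\nu}+1\le 2\floor{\nu}-1$ when $c=1$.
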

The convergence rate of the supremum of the MSE for wrapped flat-top kernel density estimators was derived by \cite{politis2001nonparametric}.  However, his result does not provide the constant associated with the supremum of the MSE. Furthermore, the literature discussing whether circular densities satisfy the assumptions laid out in his work remains sparse. 
 
To derive the ISB employing its estimator, we introduce the following three smoothness conditions applicable to commonly used circular densities:
\begin{enumerate}[\textrm{C}1)]
\item There exists $r \in (0, \infty)$ such that
\begin{align*}
\frac{1}{2\pi}\sum_{t \in \mathbb{Z}}t^{2r}|\phi_{t}(f)|^{2}=:C_{r}(f) < \infty.
\end{align*}
% Density $f(\theta)$ is $p$ times continuously differentiable.
\item There exists positive constants $\alpha$ and $\tau$ such that
\begin{align*}
\frac{1}{2\pi}\sum_{t \in \mathbb{Z}}e^{\tau|t|^{\alpha}}|\phi_{t}(f)|^{2}=:I_{\alpha, \tau}(f) < \infty.
\end{align*}
\item There exists a non-negative integer $T_{f}$ such that $\phi_{t}(f)=0$ for $|t|>T_{f}$.
\end{enumerate}

The characteristic function converges faster as the smoothness of $f$ increases \citep[pp. 42-43]{ stein2011fourier}. The standard assumption in circular kernel density estimation studies is that circular density $f(\theta)$ is $p$ times continuously differentiable. When $r=p$, Parseval's identity implies $C_{p}(f)=R(f^{(p)}) < \infty$, where $R(g) := \int^{2\pi}_{0}g(\theta)^{2}d\theta$. Thus, Condition C1) corresponds to assuming that the $p$-th derivative of $f$ is square-integrable. Moreover, Condition C3) implies Condition C2), and Condition C2) implies Condition C1).

Conditions C1), C2), and C3) are essentially the same as those introduced in  \cite{politis2001nonparametric}, namely,  $K_{1}$, $K_{2}$, and $K_{3}$ . Conditions C1) and $K_{1}$ concern the polynomial decay of the characteristic function $\phi_{t}$, while Conditions C2) and $K_{2}$ concern its exponential decay. Condition C1) is suitable for expressing the roughness magnitude $R(f^{(p)})$, whereas Condition C2) can be used to derive the constant of the upper bound of the MISE. Condition C3) is the same as condition $K_{3}$.
\begin{theorem}\label{theo: rMISE}
We assume that Condition C1) holds. Then, employing a wrapped flat-top kernel function, we obtain
\begin{align}
\ISB[\hat{f}_{\nu,c}(\theta)] &\le \frac{1}{v^{2r}}C_{r}(f). \label{eq: ISB2r}
\end{align}
Combining Theorem \ref{theo: IVwftk} and \eqref{eq:  ISB2r} yields
\begin{align}
\MISE[\hat{f}_{\nu,c}(\theta)]&\le \frac{1}{\nu^{2r}}C_{r}(f) + \frac{c\nu}{\pi n}.\label{eq: MISE2r}
\end{align}
The minimizer of the right-hand side of \eqref{eq: MISE2r} is 
\begin{align*}
\nu_{r} &= \left[\frac{2\pi r C_{r}(f)n}{c}\right]^{1/(2r+1)}.
\end{align*}
When employing $\nu_{r}$, the order of the minimizing MISE is $O(n^{-2r/(2r+1)})$.
\end{theorem}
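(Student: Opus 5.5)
The plan is to bound the ISB directly from the exact expression \eqref{eq: ISB0}, which is the specialisation of Theorem \ref{theo: MISE} to the wrapped flat-top kernel. Both sums in \eqref{eq: ISB0} run only over frequencies with $|t|>\floor{\nu}$. Since $|t|$ is an integer, $|t|\ge \floor{\nu}+1>\nu$ for every such $t$. Hence $1< t^{2r}/\nu^{2r}$ on the whole range of summation, and each term $|\phi_{t}(f)|^{2}$ can be multiplied by $t^{2r}/\nu^{2r}$ at the cost of only enlarging the sum.

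Next I control the weights $|1-\phi_{t}(K_{\nu,c})|^{2}$. For $|t|\ge c\floor{\nu}$ the weight is $1$. For $\floor{\nu}<|t|<c\floor{\nu}$ the weight is $|1-g(t;\nu,c)|^{2}$, and this step needs care. If $g$ takes values in $(0,1)$, which is the case for the wrapped trapezoid kernel since $g^{\wtk}=(c-|t|/\floor{\nu})/(c-1)\in(0,1)$ there, then $|1-g|^{2}\le 1$. I will state this as the operative requirement on $g$. If $g$ were allowed to be negative, as the condition $0<|g|<1$ alone permits, one only gets $|1-g|^{2}<4$, and the constant in \eqref{eq: ISB2r} would become $4C_{r}(f)$ without any change in the rate. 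This is the one point where the argument depends on the shape of $g$ rather than on the general definition.

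With both ingredients in place,
\begin{align*}
\ISB[\hat{f}_{\nu,c}(\theta)]\le \frac{1}{2\pi}\sum_{|t|>\floor{\nu}}|\phi_{t}(f)|^{2}\le \frac{1}{2\pi\nu^{2r}}\sum_{t\in\mathbb{Z}}t^{2r}|\phi_{t}(f)|^{2}=\frac{C_{r}(f)}{\nu^{2r}},
\end{align*}
where the last sum is finite by Condition C1). Adding the bound $c\nu/(\pi n)$ from Theorem \ref{theo: IVwftk} then gives \eqref{eq: MISE2r}.

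Finally, let $G(\nu)=C_{r}(f)\nu^{-2r}+c\nu/(\pi n)$ denote the right-hand side of \eqref{eq: MISE2r}. It is strictly convex on $(0,\infty)$. Setting $G'(\nu)=-2rC_{r}(f)\nu^{-2r-1}+c/(\pi n)=0$ gives $\nu^{2r+1}=2\pi rC_{r}(f)n/c$, which is $\nu_{r}$. Substituting back, both terms are proportional to $n^{-2r/(2r+1)}$: indeed $\nu_{r}^{-2r}\asymp n^{-2r/(2r+1)}$ and $\nu_{r}/n\asymp n^{1/(2r+1)-1}=n^{-2r/(2r+1)}$. This yields the stated $O(n^{-2r/(2r+1)})$ rate.
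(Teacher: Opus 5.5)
Your proof is correct and is essentially the paper's argument. You bound the weights $|1-\phi_t(K_{\nu,c})|^2$ by $1$, use $(t/\nu)^{2r}>1$ on the integer range $|t|>\floor{\nu}$ (equivalently $|t|>\nu$), and extend the sum to all of $\mathbb{Z}$; the paper writes the tail as $C_r(f)\epsilon(\nu)/\nu^{2r}$ with $\epsilon(\nu)\le 1$, which is the same step. Your caveat about the sign of $g$ is a real improvement, because the paper's first inequality silently uses $|1-g(t;\nu,c)|^2\le 1$: this needs $g\ge 0$, which the stated condition $0<|g|<1$ does not guarantee, so under the literal definition only the constant $4C_r(f)$ is justified.
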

\begin{remark}\label{rem: rMISE}
Condition C1) involves non-differentiable densities such as a triangular density, which is defined as
\begin{align*}
f_{\rho}^{\mathrm{TR}}(\theta):= \frac{1}{8\pi}\{4 - \pi^{2}\rho + 2\pi\rho|\pi - \theta|\}, \quad \rho \in \left[0, \frac{4}{\pi^{2}}\right], 
\end{align*}
and its characteristic functions are given by $\phi_{2t-1}(f_{\rho}^{\mathrm{TR}})= \rho/(2t-1)^2$ with all the even moments being zero \citep[p. 34]{Jammalamadaka2001Topics}. Its density shape is a triangle and it is symmetric about zero. For the triangular density, we can use the positive value $r = 3/2- \epsilon/2$ with a low value $\epsilon>0$ to satisfy Condition C1). Then, we find that the order of the minimizing MISE is $O(n^{-(3-\epsilon)/4-\epsilon})$. Previously introduced kernel density estimators cannot derive the asymptotic MISE for non-differentiable densities. For instance, to calculate the asymptotic MISE, a rose diagram, which is a histogram-type estimator on the circle and not smooth, must assume that the underlying circular density is differentiable \citep{tsuruta2023automatic}. This indicates that wrapped flat-top kernel density estimators are useful for more densities than the previously introduced estimators.
\end{remark}

\begin{theorem}\label{theo: expMISE}
We assume that Condition C2) holds. Then, employing a wrapped flat-top kernel function, we obtain
\begin{align}
\ISB[\hat{f}_{\nu,c}(\theta)] \le I_{\alpha, \tau} e^{-\tau|\nu|^{\alpha}}.\label{eq: ISBexp}
\end{align}
Employing $\nu = O((\tau^{-1}\log n)^{1/\alpha})$ yields $\mathrm{ISB}[\hat{f}_{\nu}(\theta)]=O(n^{-1})$. This and Theorem \ref{theo: IVwftk} imply
\begin{align}
\MISE[\hat{f}_{\nu,c}(\theta)] \le  \frac{c\tau^{-1} (\log n)^{1/\alpha}}{\pi n} + O(n^{-1})
\end{align}
and the order of the minimizing MISE is $O((\log n)^{1/\alpha}n^{-1})$.
\end{theorem}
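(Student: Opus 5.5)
We start from the exact ISB expression \eqref{eq: ISB0}. Because $|g(t;\nu,c)|<1$, each factor $|1-\phi_t(K_{\nu,c})|^2$ with $\floor{\nu}<|t|<c\floor{\nu}$ is bounded by a constant, and it vanishes for $|t|\le\floor{\nu}$. So the ISB is controlled by the tail sum $\frac{1}{2\pi}\sum_{|t|>\floor{\nu}}|\phi_t(f)|^2$, up to that constant. In the trapezoid and sinc cases the constant is $1$, since $g\in(0,1)$ there; for a general $g$ taking negative values it would be at most $4$. We adopt the convention used for Theorem \ref{theo: rMISE}, namely that $|1-\phi_t(K_{\nu,c})|\le 1$ on the transition band. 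Otherwise we carry the harmless constant into the $O(\cdot)$ statements.

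Next we insert the weight from Condition C2). For $|t|>\floor{\nu}$ we have $|t|\ge\floor{\nu}+1>\nu$, so $e^{\tau|t|^\alpha}\ge e^{\tau\nu^\alpha}$, that is, $1\le e^{-\tau\nu^\alpha}e^{\tau|t|^\alpha}$. Multiplying each tail term by this factor gives
\begin{align*}
\ISB[\hat f_{\nu,c}(\theta)]\le e^{-\tau\nu^\alpha}\frac{1}{2\pi}\sum_{|t|>\floor{\nu}}e^{\tau|t|^\alpha}|\phi_t(f)|^2\le I_{\alpha,\tau}(f)e^{-\tau|\nu|^\alpha},
\end{align*}
which is \eqref{eq: ISBexp}. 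The step that needs care is this monotonicity. The argument works because the flat region covers all $|t|\le\floor{\nu}$, and every integer above $\floor{\nu}$ strictly exceeds $\nu$.

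For the rate, we choose $\nu=(\tau^{-1}\log n)^{1/\alpha}$, or any $\nu$ with $\tau\nu^\alpha\ge\log n$. Then $e^{-\tau\nu^\alpha}\le n^{-1}$, so $\ISB=O(n^{-1})$. Theorem \ref{theo: IVwftk} gives $\IV\le c\nu/(\pi n)=c(\tau^{-1}\log n)^{1/\alpha}/(\pi n)$. Note that the statement writes $c\tau^{-1}(\log n)^{1/\alpha}$; we would write the constant as $\tau^{-1/\alpha}$ and record that the discrepancy is only in the constant. Adding the two bounds yields the displayed MISE bound, which is of order $(\log n)^{1/\alpha}n^{-1}$.

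It remains to justify that this order is essentially minimal for the upper bound $I e^{-\tau\nu^\alpha}+c\nu/(\pi n)$. Take any $\nu$ with $\nu^\alpha\le(1-\delta)\tau^{-1}\log n$. Then the bias term is at least of order $n^{-(1-\delta)}$, which is larger than $(\log n)^{1/\alpha}/n$. Take instead any $\nu$ with a larger power of $\log n$. Then the variance term is larger than $(\log n)^{1/\alpha}/n$. Hence the minimized bound has order $(\log n)^{1/\alpha}n^{-1}$. We expect this last comparison, together with the bookkeeping of the constant on the transition band, to be the only delicate points; the rest is direct.
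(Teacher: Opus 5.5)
Your argument is correct and is the paper's proof. You restrict the ISB to the tail $|t|>\floor{\nu}$ (equivalently $|t|>\nu$ for integers), insert $1\le e^{-\tau\nu^{\alpha}}e^{\tau|t|^{\alpha}}$, dominate the weighted tail by $I_{\alpha,\tau}(f)$ (the paper's factor $\epsilon_2(\nu)\le 1$), and then add the bound of Theorem~\ref{theo: IVwftk}; your caveats are sound refinements rather than gaps: the paper silently uses $|1-g|^2\le 1$ on the transition band, which requires $g\ge 0$ (true for the trapezoid and sinc kernels, and otherwise only $|1-g|^2<4$), the constant from $\nu=(\tau^{-1}\log n)^{1/\alpha}$ is indeed $c\tau^{-1/\alpha}$ rather than $c\tau^{-1}$, and your optional lower-bound comparison works once its second case is taken to be all $\nu$ with $\nu^{\alpha}>(1-\delta)\tau^{-1}\log n$.
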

\begin{remark}\label{rem: expMISE}
A typical example of Condition C2) is the wrapped $\alpha$-stable distribution $S_{\alpha}(\tau, \beta, \mu)$, where $\alpha \in (0, 2]$, $\beta \in [-1,1]$, $\tau \in (0, \infty)$, and $\mu \in [0, 2\pi)$ are the parameters of stability, skewness, concentration, and mean direction, respectively \citep{Gatto2003inference}. When $\beta =0$, the subclass  $S_{\alpha}(\tau, 0 , \mu)$ is symmetric about $\mu$. The characteristic function of $S_{\alpha}(\tau, \beta, \mu)$ is given by
\begin{align*}
\phi_{t}(f^{W\alpha S}_{\alpha, \tau, \beta, \mu}):= \begin{cases}
\exp\{-\tau^{\alpha}|t|^{\alpha}[1- i\beta\sgn(t)\tan(\alpha\pi/2)] + i\mu t\}&\text{if}\quad \alpha \in (0,1) \cup (1,2],\\
\exp(-\tau|t| + i\mu t)&\text{if}\quad \alpha = 1.
\end{cases}
\end{align*}
Two well-known examples of the symmetric wrapped  $\alpha$-stable distribution are the wrapped normal distribution with $\alpha = 2$ ($\beta$ is irrelevant) and the wrapped Cauchy distribution with  $\alpha = 1$ and $\beta=0$. The wrapped normal distribution can approximate the von Mises distribution with a sufficiently large concentration parameter \citep[p. 38]{mardia2000directional}.
\end{remark}
\begin{theorem}\label{theo: boundMISE}
We assume that Condition C3) holds. Then, employing a wrapped flat-top kernel function that satisfies $\nu \in [T_{f}, B]$, we obtain
\begin{align}
\ISB[\hat{f}_{\nu,c}(\theta)] = 0. \label{eq: ISBTf}
\end{align}
Combining Theorem \ref{theo: IVwftk} and \eqref{eq: ISBTf} yields
\begin{align}
\MISE[\hat{f}_{v,c}(\theta)]\le \frac{c\nu}{\pi n}.\label{eq: MISE2p}
\end{align}
The right-hand side of \eqref{eq: MISE2p} with $\nu = T_{f}$ is the smallest for $\nu \in [T_{f}, B]$. 
\end{theorem}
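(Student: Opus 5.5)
The argument rests on the exact ISB expression \eqref{eq: ISB0}. I will show that each sum in it is empty or vanishes termwise under Condition C3). First, the flat-top region is $|t|\le\floor{\nu}$, and there $\phi_{t}(K_{\nu,c})=1$, so those frequencies contribute nothing to the ISB. Since $T_{f}$ is a non-negative integer and $\nu\ge T_{f}$, we have $\floor{\nu}\ge T_{f}$ by monotonicity of the floor function.

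With this, both sums in \eqref{eq: ISB0} vanish:
\begin{itemize}
\item In the first sum $|t|>\floor{\nu}\ge T_{f}$, so $\phi_{t}(f)=0$ by Condition C3).
\item In the second sum $|t|\ge c\floor{\nu}\ge\floor{\nu}$ because $c\ge1$. If $\floor{\nu}>0$, then $|t|>T_{f}$ as well, so every term is zero.
\end{itemize}
This gives \eqref{eq: ISBTf}.

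The boundary cases need care, and I expect this to be the main obstacle. When $\floor{\nu}=T_{f}$, the second sum contains $|t|=c\floor{\nu}$, which for $c=1$ equals $T_{f}$ itself, so $\phi_{T_{f}}(f)$ need not vanish. I would handle this in one of two ways.
\begin{itemize}
\item Interpret the definition so that the flat region $|t|\le\floor{\nu}$ takes precedence, i.e.\ $\phi_{t}(K_{\nu,c})=1$ whenever $|t|\le\floor{\nu}$. Then the second sum effectively runs over $|t|>\floor{\nu}$.
\item Otherwise note that for $c\ge2$ (or $\floor{\nu}>T_{f}$) the issue disappears, since then $c\floor{\nu}>T_{f}$.
\end{itemize}
The case $\nu=0$ forces $T_{f}=0$, so $f$ is the uniform density $1/(2\pi)$. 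The kernel $K_{0,c}$ has $\phi_{0}=1$ and all other coefficients zero, hence it reproduces $f$ exactly and the ISB is zero.

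For \eqref{eq: MISE2p}, I will apply Theorem \ref{theo: MISE} to write $\MISE=\ISB+\IV$ and substitute $\ISB=0$. Theorem \ref{theo: IVwftk} then gives $\IV\le c\nu/(\pi n)$.

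Finally, the bound $c\nu/(\pi n)$ is strictly increasing in $\nu$. Its minimum over the interval $[T_{f},B]$ is therefore attained at the left endpoint $\nu=T_{f}$. There the bound is $cT_{f}/(\pi n)=O(n^{-1})$, which is the claimed $\sqrt{n}$-consistency.
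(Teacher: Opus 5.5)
Your proposal is correct and is essentially the paper's argument. The paper's proof is just the one-line combination of the exact ISB/IV expressions from Theorem~\ref{theo: MISE} with Theorem~\ref{theo: IVwftk}: the flat region $|t|\le\floor{\nu}$ covers the support $|t|\le T_{f}$ of $\phi_{t}(f)$, so the ISB vanishes, and $c\nu/(\pi n)$ is increasing in $\nu$.

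Your extra care at the overlap $|t|=c\floor{\nu}=\floor{\nu}$ (for $c=1$, and also whenever $\floor{\nu}=0$) is warranted. The ``flat region takes precedence'' reading you adopt is the one the paper uses for the wrapped sinc kernel, and it is what the statement needs. One small correction: your bullet claim ``$\floor{\nu}>0$ implies $|t|>T_{f}$'' holds only for $c\ge2$, but your next paragraph in effect repairs this.
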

We obtain the proof by combining Theorems \ref{theo: MISE} and \ref{theo: IVwftk}  for $\nu \in [T_{f}, B]$. 
\begin{remark}\label{rem: boundMISE}
Theorem \ref{theo: boundMISE} implies that a wrapped flat-top kernel allows one to achieve $\sqrt{n}$-consistency. This convergence rate was already established by \cite{politis2001nonparametric}. The contribution of Theorem \ref{theo: boundMISE} lies in identifying the constant of the upper bound of the MISE.
Some distributions satisfy Condition C3). The example with $T_{f}=0$ is the circular uniform distribution with the following characteristic function: the value is one if $t= 0$, and zero otherwise. 
Next, a typical example of a distribution with $T_{f}=1$ is the cardioid distribution whose density is
\begin{align}
f^{C}_{\rho, \mu}(\theta)= \frac{1}{2\pi}[1+2\rho\cos(\theta-\mu)],\quad \rho \in \left[0,  \frac{1}{2}\right],
\end{align}
where $\rho$ is the concentration parameter and its characteristic function is 
$\phi_{1}(f^{C}_{\rho,\mu})= \rho e^{i\mu}$ and $\phi_{t}(f^{C}_{\rho,\mu})=0$ for $t\ge2$.  
\end{remark}
Hence, we establish the theoretical convergence rate of the MISE for these popular circular densities. Under Conditions C1) or C3), the convergence rate is similar to that of the supremum of the MSE derived by \cite{politis2001nonparametric}. However, under Condition C2), the convergence rate differ slightly.
\section{Numerical experiments}\label{sec: numerical}
We conduct two numerical experiments to analyze the finite sample behavior of the wrapped flat-top kernel density estimators. The simulations are conducted using the \textsf{R} statistical software. Section \ref{subsec: optimal} shows the small sample characteristics of the wrapped flat-top kernel density estimators, with the theoretical smoothing parameter derived in Section \ref{sec: theories}. In Section \ref{sec: selectors}, we propose smoothing parameter selectors for the wrapped flat-top kernel density estimators. Section \ref{subsec: compare} compares the wrapped flat-top kernel with the previously introduced estimators using selected smoothing parameter selectors.

\subsection{Finite sample behavior of the wrapped flat-top kernel estimators}\label{subsec: optimal}
We use four of the 20 standard scenarios introduced by \cite{oliveira2012plug}, as shown in Table \ref{tab; scenario} (see \cite{oliveira2012plug} for the plots): the circular uniform distribution (M1), wrapped normal distribution (M3), cardioid distribution (M4), and wrapped Cauchy distribution (M5).  These random samples are generated in \texttt{dcircmix} function of the \texttt{NPCirc} package \citep{oliveira2014npcirc}.  The performance measure is the MISE based on 1000 repetitions, with the sample sizes $n=50, 100, 200, 400, 800$, and $1600$. 

For the wrapped trapezoid kernel, we employ $c=2$, as recommended by \cite{politis1999multivariate}.
In each scenario, we employ the optimal smoothing parameter. We select $\nu=0$ and $\nu=1$ for the circular uniform and cardioid distributions, respectively, with these scenarios included in Condition C3). Furthermore, we select $\nu= \floor{(\tau^{-1}\log n)^{1/\alpha}}$ for the wrapped normal and wrapped Cauchy distributions, which are $\tau= -\log 0.9$ and $\alpha = 2$ in the former and $\tau= -\log 0.8$ and $\alpha = 1$ in the latter. These scenarios are included in Condition C2).

As shown in Figure \ref{fig: log-log}, when employing the optimal smoothing parameter, the MISEs of the wrapped sinc and trapezoid kernel density estimators are almost the same as the theoretical convergence rate of the MISE derived in Section \ref{sec: theories} in all the scenarios. However, the logarithms of M3 and M5 are curves as opposed to strictly straight lines, as these have the terms $-2^{-1}\log\log{n}$ or $-\log\log{n}$, respectively, which exhibit little change because $\log\log{50} \simeq 1.364$ and $\log\log{1600}  \simeq 1.998$. Therefore, the wrapped flat-top kernel density estimators with the theoretical optimal smoothing parameter have the same order as the theoretical MISE. Hence, we confirm these theoretical performances. Furthermore, in the scenarios included in  Condition C3) (M1 and M4), the MISEs of the wrapped flat-top kernel density estimators are almost the same. In the scenarios included in Condition C2) (M3 and M5), the MISE of the wrapped sinc kernel density estimator is lower than that of the wrapped trapezoid kernel density estimator.
\begin{table}[htbp]
  \centering
  \caption{Scenarios introduced by \cite{oliveira2012plug}. $\mathrm{vM}(\mu, \kappa)$ denotes the von Mises distribution with mean direction $\mu$ and concentration parameter $\kappa$. $\mathrm{WN}(\mu, \rho)$, $\mathrm{C}(\mu, \rho)$, and $\mathrm{WC}(\mu, \rho)$ denote the wrapped normal, cardioid, and wrapped Cauchy distributions, with concentration parameter $\rho$. $\mathrm{WSN}(\xi, \eta, \lambda)$ denotes the wrapped skew-normal distribution characterized by location parameter $\xi$, scale parameter $\eta$, and skewness
parameter $\lambda$ \citep{pewsey2000wrapped}.}
    \begin{tabular*}{\textwidth}{@{\extracolsep\fill}cp{12cm}}
\toprule
    \multicolumn{1}{l}{No.} & \multicolumn{1}{l}{Distribution } \\\midrule
M1 & Circular uniform\\
M2 &  Von Mises $\mathrm{vM}(\pi, 1)$\\
M3 & Wrapped normal $\mathrm{WN}(\pi, 0.9)$\\
M4 & Cardioid $\mathrm{C}(\pi, 0.5)$\\
M5 & Wrapped Cauchy $\mathrm{WC}(\pi, 0.8)$\\
M6 & Wrapped skew--normal $\mathrm{WSN}(\pi, 1, 20)$\\
M7 & Mixture of two von Mises $1/2\mathrm{vM}(0,4) + 1/2\mathrm{vM}(\pi,4) $\\
M8 & Mixture of two von Mises $1/2\mathrm{vM}(2,5) + 1/2\mathrm{vM}(4,5) $\\
M9 & Mixture of two von Mises $1/4\mathrm{vM}(0,2) + 3/4\mathrm{vM}(\pi/\sqrt{3},2) $\\
M10 & Mixture of von Mises and wrapped Cauchy:\par $4/5\mathrm{vM}(\pi,5) + 1/5\mathrm{WC}(4\pi/3,0.9) $\\
M11 & Mixture of three von Mises:\par $1/3\mathrm{vM}(\pi/3,6) + 1/3\mathrm{vM}(\pi,6) +  1/3\mathrm{vM}(5\pi/3,6) $\\
M12 & Mixture of three von Mises $2/5\mathrm{vM}(\pi/2,4) + 1/5\mathrm{vM}(\pi,5) + 2/5\mathrm{vM}(3\pi/2,4) $\\
M13 & Mixture of three von Mises:\par $2/5\mathrm{vM}(0.5,6) + 2/5\mathrm{vM}(3,6) + 1/5\mathrm{vM}(5,24) $\\
M14 & Mixture of four von Mises:\par$1/4\mathrm{vM}(0,12) + 1/4\mathrm{vM}(\pi/2,12) + 1/4\mathrm{vM}(\pi,12) + 1/4\mathrm{vM}(3\pi/2,12) $\\
M15 & Mixture of wrapped Cauchy, wrapped normal, von Mises and wrapped skew-normal:\par $3/10\mathrm{WC}(\pi-1,0.6) + 1/4\mathrm{WN}(\pi+0.5,0.9) + 1/4\mathrm{vM}(\pi+2,3) + 1/5\mathrm{WSN}(6,1,3) $\\
M16& Mixture of five von Mises:\par$1/5\mathrm{vM}(\pi/5,18) + 1/5\mathrm{vM}(3\pi/5,18) + 1/5\mathrm{vM}(\pi,18) + 1/5\mathrm{vM}(7\pi/5,18) + 1/5\mathrm{vM}(9\pi/5,18)$\\
M17 & Mixture of cardioid and wrapped Cauchy $2/3\mathrm{C}(\pi,0.5) +  1/3\mathrm{WC}(\pi,0.9)$\\
M18 & Mixture of four von Mises:\par $1/2\mathrm{vM}(\pi,1) + 1/6\mathrm{vM}(\pi-0.8,30) + 1/6\mathrm{vM}(\pi,30) + 1/6\mathrm{vM}(\pi + 0.8,30) $\\
M19 & Mixture of five von Mises:\par $4/9\mathrm{vM}(2,3) + 5/36\mathrm{vM}(4,3) + 5/36\mathrm{vM}(3.5,50) + 5/36\mathrm{vM}(4, 50) + 5/36\mathrm{vM}(4.5, 50) $\\
M20 & Mixture of two wrapped skew-normal and two wrapped Cauchy:\par $1/3\mathrm{WSN}(0, 0.7, 20) + 1/3\mathrm{WSN}(\pi,0.7,20) + 1/6\mathrm{vM}(3\pi/4, 0.9) + 1/6\mathrm{vM}(7\pi/4, 0.9) $\\
\bottomrule
    \end{tabular*}%
  \label{tab; scenario}%
\end{table}%

\subsection{Smoothing parameter selectors}\label{sec: selectors}
We introduce two smoothing parameter selectors for the wrapped kernel density estimators: the least squares cross-validation (LSCV) and empirical rule (ER) selectors. The ER selector is proposed by \cite{politis2003adaptive}.  These selectors are nonparametric estimators that do not depend on any parametric density. The LSCV selector is defined as follows.
\begin{definition}
Denote that $\hat{f}_{\nu, c, -i}(\theta):= (n-1)^{-1}\sum_{j\neq i}K_{\nu,c}(\theta -\Theta_{j})$. Let $L$ be any positive integer. Then,  the LSCV selector is given by
\begin{align*}
\hat{\nu}_{\cv}&:= \argmin_{\nu\in\{0,1,2,\dots, L\}} \cv(\nu),
\shortintertext{where}
 \cv(\nu) &:= \int^{2\pi}_{0}\hat{f}_{\nu,c}(\theta)^{2}d\theta -\frac{2}{n}\sum_{i=1}^{n}\hat{f}_{\nu,c,-i}(\Theta_{i}).
\end{align*} 
\end{definition}

\cite{politis2003adaptive} defines the ER selector for picking $\nu$ as follows.
\begin{definition}
Define the empirical characteristic function for any density $f$ as $\hat{\phi}_{t}(f) :=n^{-1}\sum_{i=1}^{n}e^{-it\Theta_{i}}$. Let $M$ be a fixed positive constant and $\ell_{n}$ be a positive non-decreasing sequence.  Then, the ER selector is given by
\begin{align*}
\hat{\nu}_{\er} := \inf\left\{\nu \in \{0, 1, 2,\dots\}\mid |\hat{\phi}_{\nu + t}(f)|^{2} < M \frac{\log n }{n},\quad \forall t \in \{1,2,\dots, \ell_{n}\}\right\}.
\end{align*}
\end{definition}
\cite{politis2003adaptive} mentions the asymptotic properties of the ER selector for spectral density functions. \cite{chacon2007note} employ its selector to estimate $D_{f}:=\sup\{t\ge0: \phi(t;f) \neq 0\}$.  
\cite{politis2003adaptive} recommends $M = 1$, $\ell_{n} = 1$ or  $M = 4$, $\ell_{n} = 5$ as selecting the parameters of the ER selector for data on the real line. For the numerical experiment, $M = 1$, $\ell_{n} = 5$ is chosen  heuristically. 
\begin{figure}[htbp]
    \begin{tabular}{cc}
      \begin{minipage}[t]{0.45\hsize}
        \centering
        \includegraphics[keepaspectratio, scale=0.15]{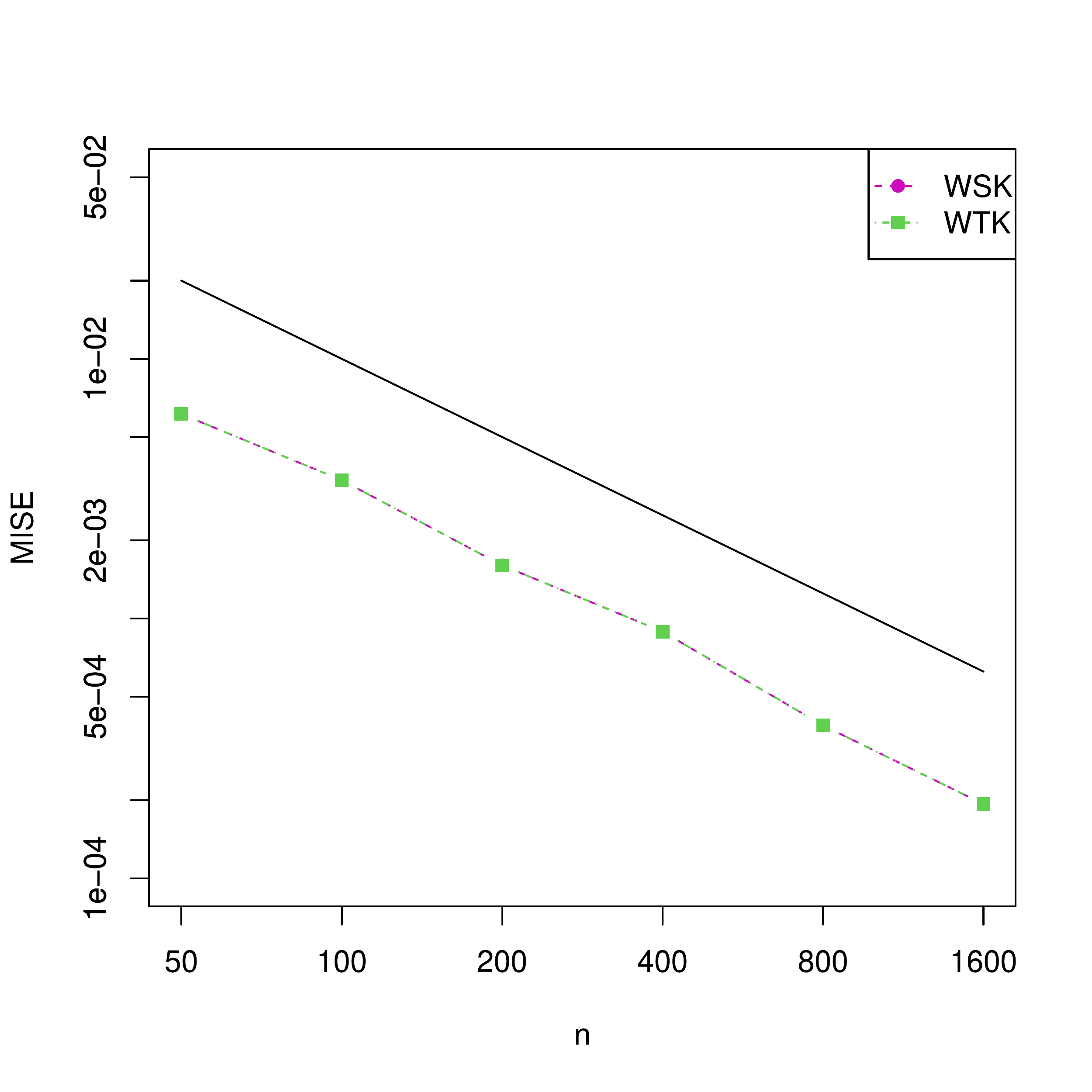}
        \subcaption{Circular uniform distribution (M1). The solid line represents $n^{-1}$, which is the order of the MISE provided in Theorem \ref{theo: boundMISE}.}
        \label{fig: cu}
      \end{minipage} &
      \begin{minipage}[t]{0.45\hsize}
        \centering
        \includegraphics[keepaspectratio, scale=0.15]{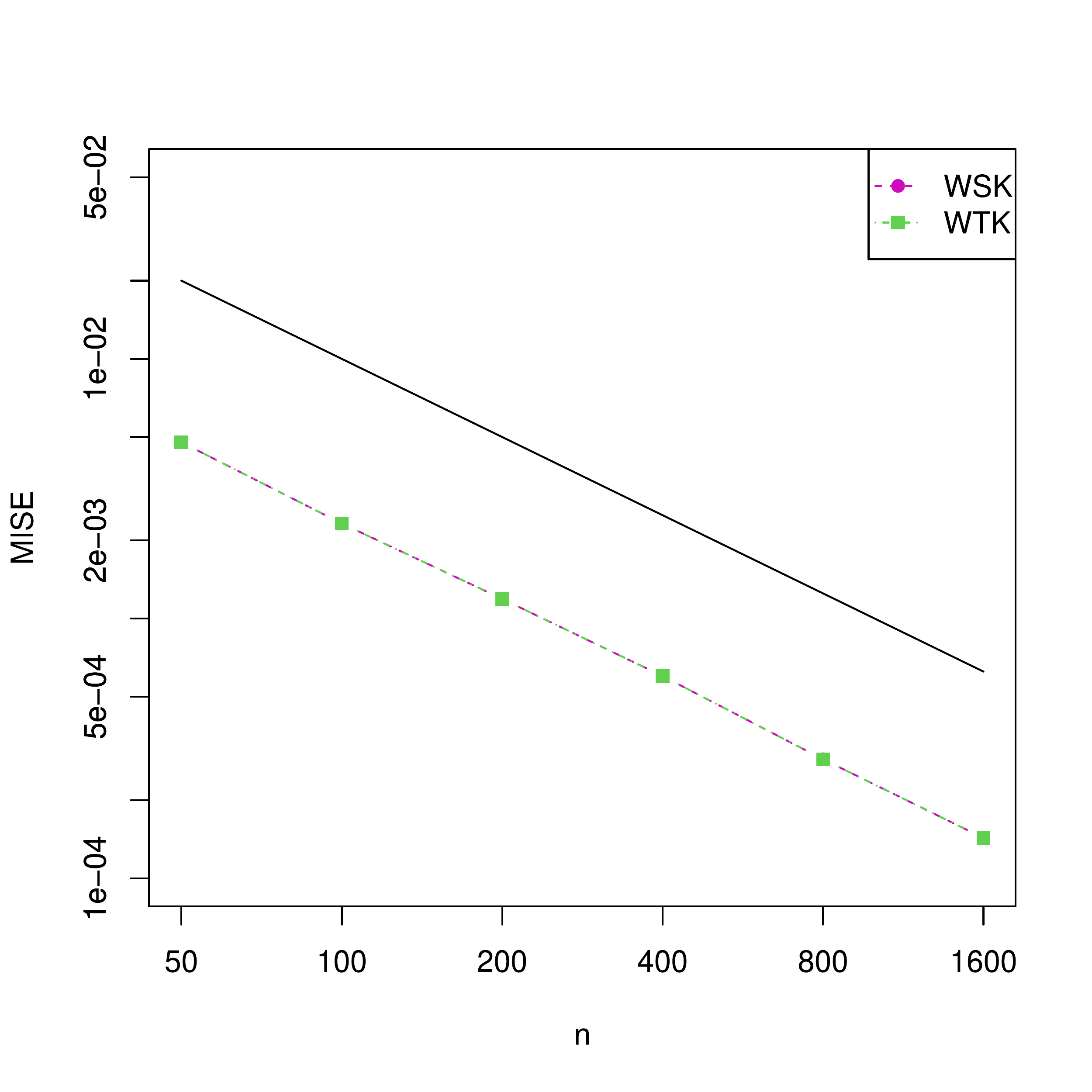}
        \subcaption{Cardioid distribution (M4). The solid line represents $n^{-1}$, which is the order of the MISE provided in Theorem \ref{theo: boundMISE}.}
        \label{fig: ca}
      \end{minipage} \\
   
      \begin{minipage}[t]{0.45\hsize}
        \centering
        \includegraphics[keepaspectratio, scale=0.15]{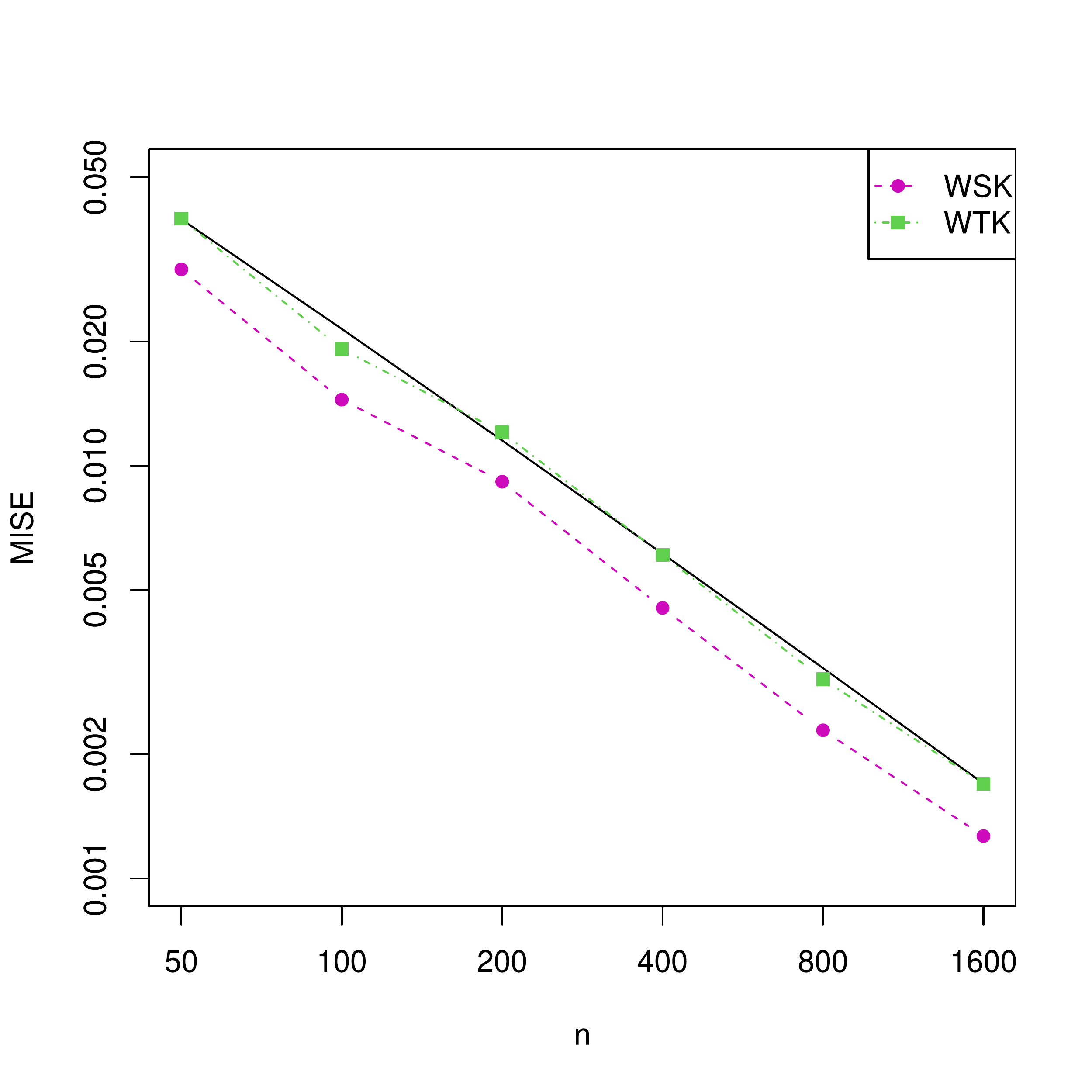}
        \subcaption{Wrapped normal distribution (M3) The solid line represents $\log{n}^{1/2}/n$, which is the order of the MISE, provided in Theorem \ref{theo: expMISE}. }
        \label{fig: wn}
      \end{minipage} &
      \begin{minipage}[t]{0.45\hsize}
        \centering
        \includegraphics[keepaspectratio, scale=0.15]{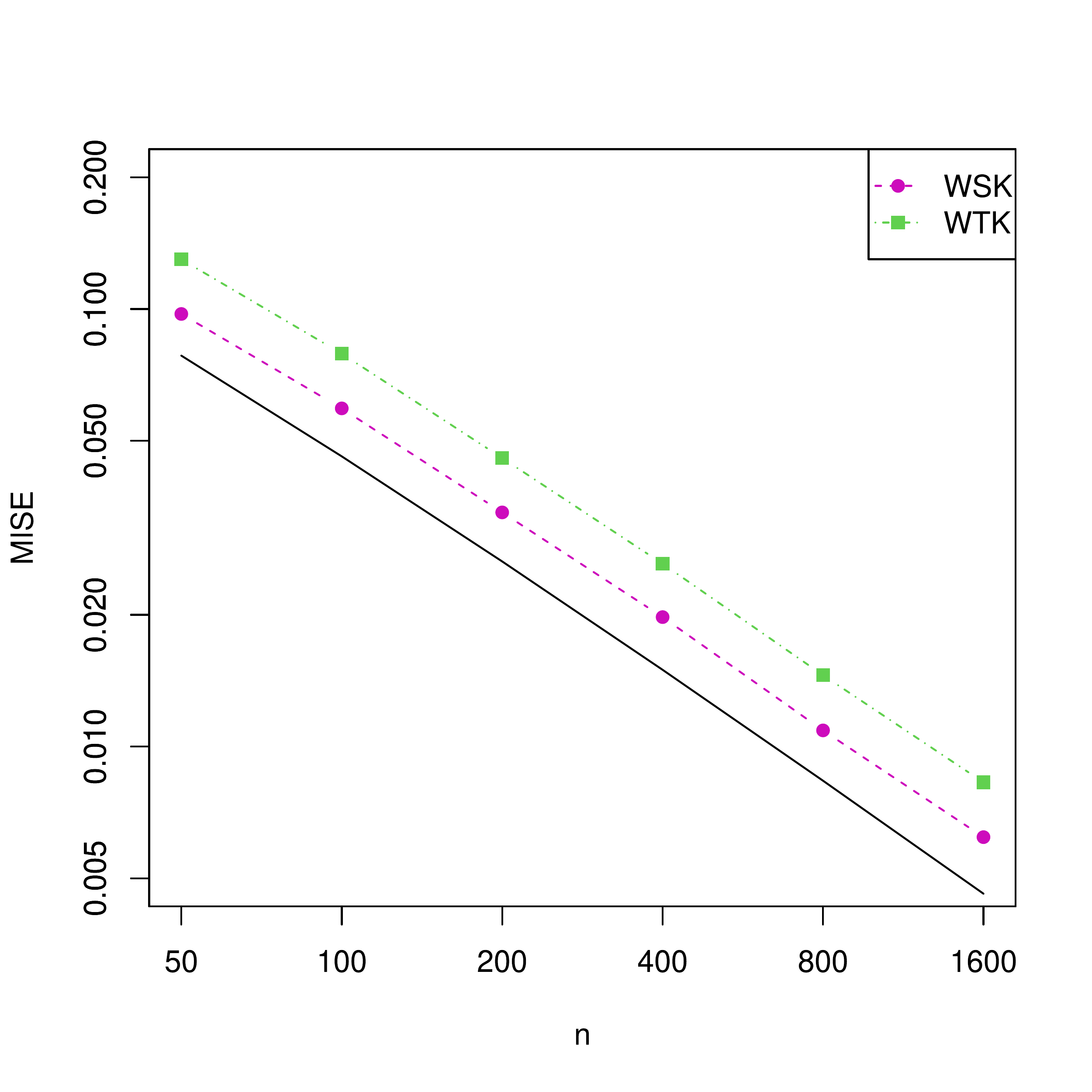}
        \subcaption{Wrapped Cauchy distribution (M5).The solid line represents $\log{n}/n$, which is the order of the MISE, provided in Theorem \ref{theo: expMISE}. }
        \label{fig: wc}
      \end{minipage} 
    \end{tabular}
     \caption{Log-log plots of $n$ (sample size) and the MISE of the wrapped sinc or wrapped trapezoid kernel density estimators based on the number of repetitions $N=1000$ with sample sizes $n=50, 100, 200, 400, 800$, and $1600$. The magenta dashed line and solid circle represent the log-log plots and MISE of the wrapped sinc kernel density estimator. The green dot-dashed line and solid box represent the log-log plots and MISE of the wrapped trapezoid kernel density estimator. The black solid line represents the log-log plots and the theoretical convergence rate of the MISE.}
\label{fig: log-log}
  \end{figure}
\subsection{Comparison with previously introduced kernel density estimators}\label{subsec: compare}
We analyze the finite sample behavior of the proposed wrapped sinc kernel, wrapped trapezoid kernel ($c=2$), and two previously introduced kernel density estimators: the von Mises kernel density estimator and fourth-order Jones and Foster kernel density estimator generated from the von Mises kernel function using the additive method proposed by \cite{tsuruta2017higher, tsuruta2024bias}. The performance measure is the MISE based on 1000 repetitions, with the sample sizes $n=1000$, $500$, $200$, and $100$. 
The scenarios are the 20 circular distributions presented by \cite{oliveira2012plug}, as shown in Table \ref{tab; scenario}.

As the smoothing parameter selectors for the wrapped flat-top kernel density estimators, we use the LSCV and ER selectors. For the LSCV selector, we select parameter $L=30$. For the von Mises kernel density estimator, we employ the LSCV, Fourier series-based plug-in (FPI), and Direct plug-in (DPI) selectors. The FPI selector is the nonparametric approach based on the empirical trigonometric moments, which achieves $\sqrt{n}$-consistency \citep{tenreiro2022kernel}. The DPI selector for circular data was introduced by \cite{DiMarzio2011kernel}. We employ the two-step DPI selector proposed by \cite{ameijeiras2024reliable}, which is available from \texttt{bw.AA.} in \texttt{NPCirc} \citep{oliveira2014npcirc}, although the-one step DPI selector has a convergence rate of $O(n^{-5/14})$ \citep{Tsuruta2020}. 

Table \ref{tab: result1} shows that when $n=1000$, the wrapped trapezoid kernel density estimator with the ER selector performs the best in six scenarios: M1, M3, M5, M8, M9, and M18. However, the performance difference between this and the wrapped sinc kernel density estimator with the ER selector in M1 and M4 is slight. The wrapped sinc kernel density estimator with the ER selector outperforms the others in multimodal scenarios: M7, M11, M12, and M16. Moreover, the wrapped sinc kernel density estimator with the ER or LSCV selectors performs worse than the wrapped trapezoid kernel density estimator with the ER estimator in M3 and M5. This result, which differs from that shown in Section \ref{subsec: optimal}, may be because the wrapped sinc kernel density estimator is susceptible to variations in smoothing parameters. Figure \ref{fig:wfk_nu} shows that as $\nu$ increases, the wrapped sinc kernel tends to
exhibit more frequent local modes away from the main mode. This behavior reflects the
fact that its characteristic function corresponds to a sharp truncation at
$|t|=\lfloor \nu \rfloor$. By contrast, the wrapped trapezoid kernel displays more stable
behavior with fewer such local features, as it retains a transition region in which the characteristic function smoothly decays to zero.

Table \ref{tab: result2} shows that when $n=500$,  the wrapped sinc and trapezoid kernel density estimators with the ER selector both outperform the previously introduced estimators in M1, M3, M4, M11, and M16. In addition, the wrapped trapezoid kernel density estimator with its selector outperforms the other estimators in M3, M5, and M8. The wrapped sinc kernel density estimator with the ER selector outperforms those in M11, M14, and M16.

Tables \ref{tab: result3} and \ref{tab: result4} show that for $n=100$ and $n=200$, the wrapped sinc and wrapped trapezoid kernel density estimators with the ER selector outperform the previously introduced estimators in M1, M4, M11, M14, and M16. In  M3, the wrapped trapezoid kernel density estimator with its selector performs the best.

\begin{figure}[htbp]
	\centering
	\begin{minipage}{0.45\linewidth}
		\centering
		\includegraphics[keepaspectratio, scale=0.38]{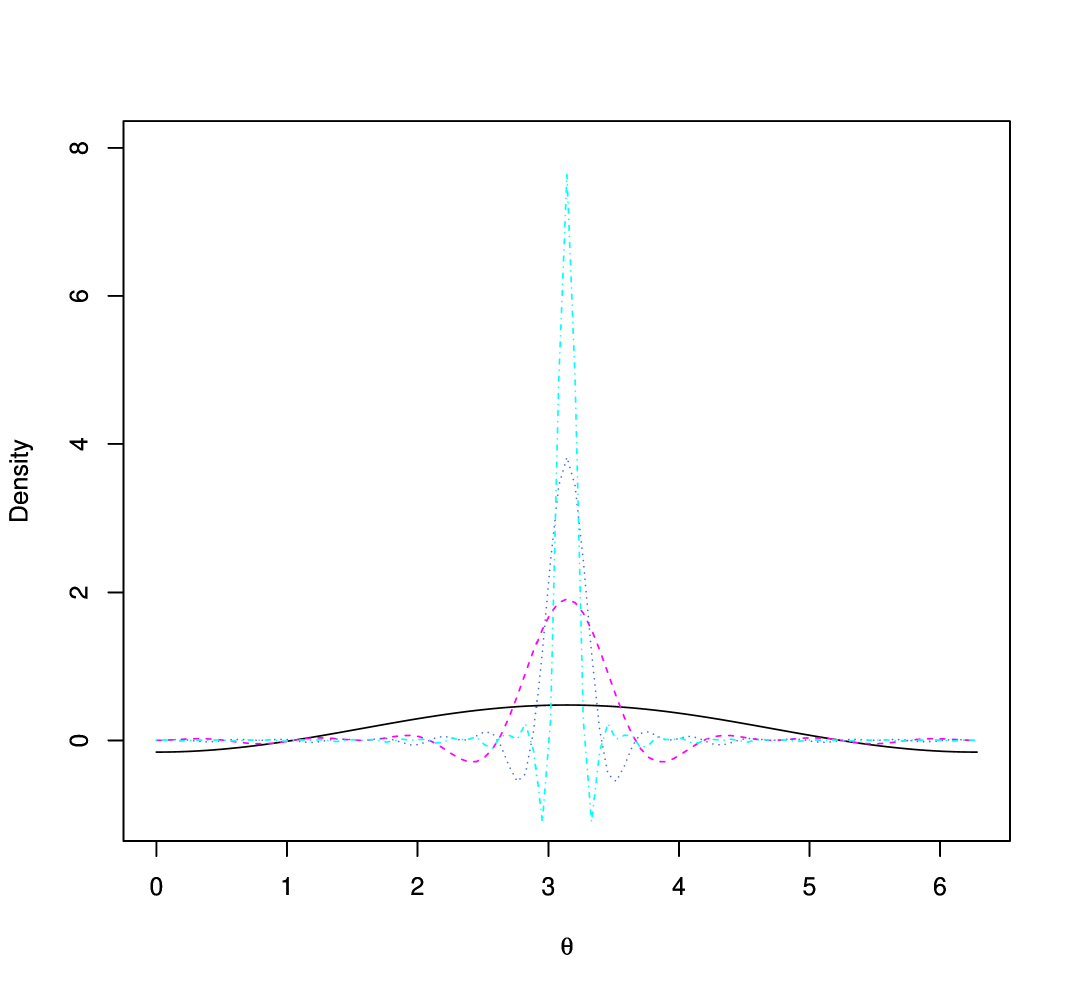}
		\subcaption{Wrapped trapezoid kernel function with mean direction $\mu=\pi$ and $c=2$.}
		\label{fig:wtk_nu}
	\end{minipage}\hfill
	\begin{minipage}{0.45\linewidth}
		\centering
		\includegraphics[keepaspectratio, scale=0.38]{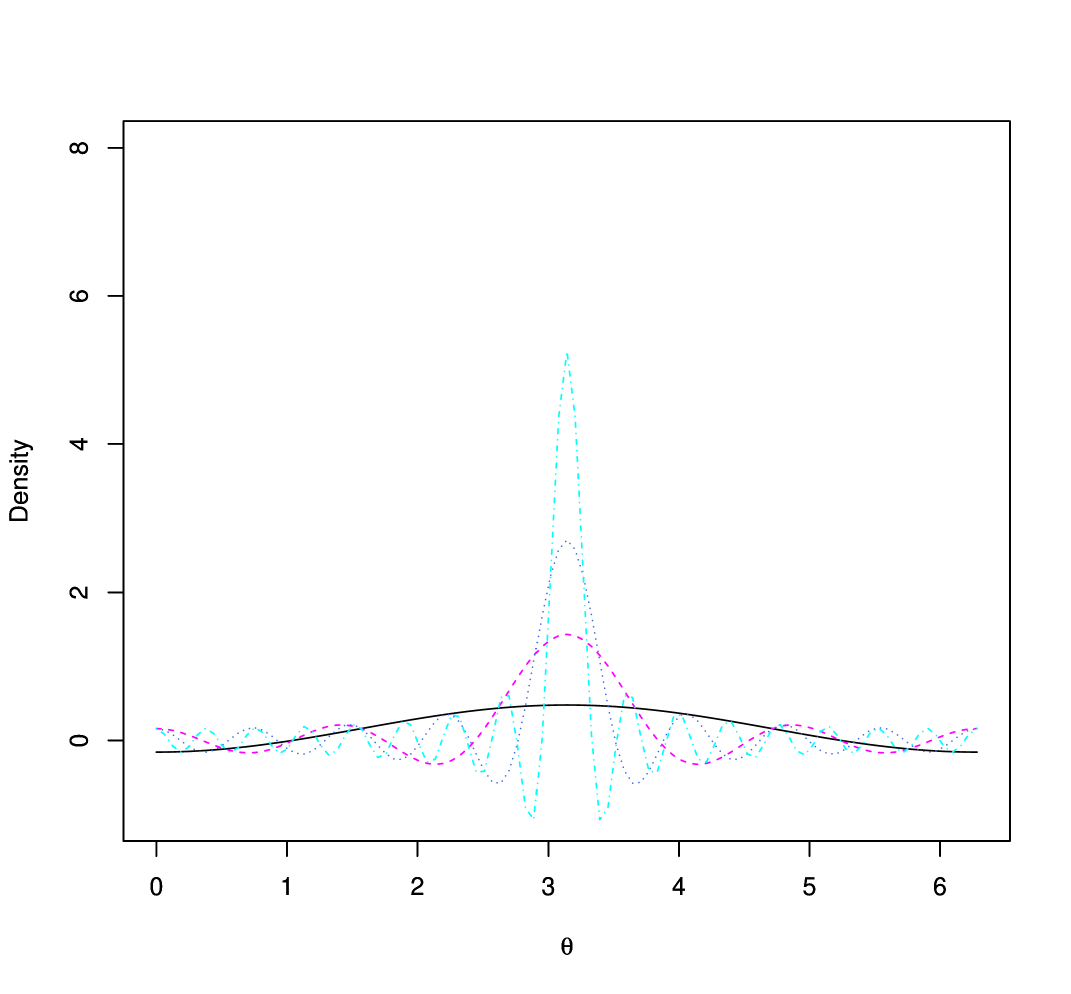}
		\subcaption{Wrapped sinc kernel function with mean direction $\mu=\pi$.}
		\label{fig:wsk_nu}
	\end{minipage}
	\caption{Effect of varying the smoothing parameter $\nu$, with $\nu = 2$ (black solid line),
		4 (magenta dashed line), 8 (blue dotted line), and 16 (cyan dot-dash line), for the wrapped
		trapezoid (left) and wrapped sinc (right) kernel functions.}
	\label{fig:wfk_nu}
\end{figure}
\begin{sidewaystable}[htbp]
  \centering
  \caption{The mean or standard error of the integrated squared error $\times 10^{4}$ of the kernel density estimators based on the number of repetitions $N=1000$ in Scenarios M1--M20. The sample size is $n=1000$. VM represents the von Mises kernel density estimator (second-order). JF4 represents the fourth-order Jones and Foster kernel density estimators based on the von Mises kernel, respectively (previously introduced estimators).  WS and WT represent the wrapped sinc and wrapped trapezoid kernel density estimators, respectively (wrapped flat-top kernel density estimators). LSCV, FPI, DPI, and ER denote the smoothing parameter selector. 
  %discussed in Section \ref{subsec: compare}. 
  }
    \begin{tabular*}{\textheight}{@{\extracolsep\fill}crrrrrrrr}
  
    \toprule
          & \multicolumn{3}{c}{VM} & \multicolumn{1}{c}{JF4} & \multicolumn{2}{c}{WS} & \multicolumn{2}{c}{WT} \\
\cmidrule{2-4}\cmidrule{6-7} \cmidrule{8-9}     No.   & \multicolumn{1}{c}{LSCV} & \multicolumn{1}{c}{FPI} & \multicolumn{1}{c}{DPI} & \multicolumn{1}{c}{LSCV} & \multicolumn{1}{c}{LSCV} & \multicolumn{1}{c}{ER} & \multicolumn{1}{c}{LSCV} & \multicolumn{1}{c}{ER} \\
    \midrule
    1     & 3.73(0.24) & 1.57(0.10) & 1.59(0.06) & 5.21(0.31) & 3.73(0.30) & 0.14(0.07) & 5.96(0.24) & 0.14(0.07) \\
    2     & 12.88(0.30) & 10.28(0.20) & 10.21(0.19) & 13.11(0.38) & 11.61(0.44) & 10.45(0.37) & 10.05(0.35) & 10.42(0.38) \\
    3     & 27.23(0.53) & 23.72(0.49) & 23.51(0.45) & 23.49(0.64) & 25.48(0.98) & 22.28(0.34) & 21.25(0.80) & 15.16(0.41) \\
    4     & 12.04(0.32) & 8.73(0.16) & 9.03(0.15) & 11.71(0.43) & 7.11(0.43) & 2.41(0.12) & 6.07(0.34) & 2.42(0.13) \\
    5     & 54.19(0.84) & 55.03(0.82) & 55.91(0.92) & 51.08(0.78) & 64.70(1.06) & 78.85(0.84) & 56.17(0.97) & 49.01(0.75) \\
    6     & 56.68(0.61) & 56.50(0.61) & 62.52(0.57) & 57.92(0.64) & 78.34(0.89) & 107.72(0.95) & 65.64(0.84) & 68.51(0.73) \\
    7     & 22.24(0.34) & 20.01(0.29) & 22.27(0.30) & 18.3(0.36) & 17.89(0.54) & 13.22(0.31) & 19.75(0.40) & 15.53(0.33) \\
    8     & 25.49(0.38) & 22.89(0.33) & 22.86(0.33) & 21.09(0.44) & 21.89(0.53) & 17.34(0.37) & 20.42(0.43) & 15.83(0.32) \\
    9     & 16.44(0.35) & 13.44(0.24) & 13.16(0.22) & 15.79(0.44) & 15.73(0.51) & 17.92(0.36) & 15.96(0.40) & 12.91(0.23) \\
    10    & 57.82(0.59) & 57.42(0.60) & 66.14(0.58) & 59.28(0.61) & 79.40(0.83) & 109.25(0.93) & 66.77(0.79) & 71.30(0.71) \\
    11    & 25.46(0.33) & 23.85(0.30) & 35.65(0.40) & 20.55(0.38) & 22.27(0.43) & 17.73(0.23) & 19.90(0.36) & 19.91(0.26) \\
    12    & 19.93(0.29) & 18.28(0.25) & 19.40(0.27) & 17.81(0.34) & 16.36(0.42) & 13.71(0.34) & 18.70(0.35) & 15.76(0.33) \\
    13    & 33.74(0.38) & 32.39(0.35) & 50.84(0.44) & 30.57(0.41) & 37.52(0.57) & 45.77(0.51) & 31.53(0.45) & 31.10(0.42) \\
    14    & 33.83(0.37) & 32.40(0.34) & 148.35(3.09) & 27.52(0.38) & 31.03(0.47) & 27.13(0.27) & 28.19(0.39) & 31.21(0.30) \\
    15    & 18.08(0.24) & 17.46(0.26) & 24.31(0.27) & 17.54(0.26) & 19.86(0.35) & 23.35(0.53) & 18.51(0.26) & 21.71(0.52) \\
    16    & 39.25(0.39) & 37.78(0.37) & 580.66(4.82) & 31.78(0.38) & 33.10(0.43) & 28.78(0.30) & 31.20(0.40) & 33.65(0.35) \\
    17    & 73.88(0.87) & 73.93(0.82) & 127.29(1.26) & 73.99(0.79) & 92.04(0.76) & 129.90(1.22) & 79.68(0.81) & 79.98(0.80) \\
    18    & 43.15(0.51) & 40.78(0.46) & 52.01(0.62) & 37.78(0.50) & 37.88(0.69) & 37.11(0.52) & 36.09(0.57) & 34.81(0.43) \\
    19    & 50.91(0.50) & 49.91(0.48) & 98.69(0.47) & 48.90(0.51) & 52.26(0.63) & 93.22(1.72) & 49.22(0.55) & 80.79(1.64) \\
    20    & 69.54(0.52) & 68.94(0.52) & 375.68(1.73) & 70.34(0.54) & 87.84(0.64) & 94.52(0.85) & 76.14(0.61) & 77.27(0.71) \\
       \bottomrule
    \end{tabular*}%
  \label{tab: result1}%
\end{sidewaystable}

\begin{sidewaystable}[htbp]
  \centering
  \caption{The mean or standard error of the integrated squared error $\times 10^{4}$ of the kernel density estimators based on the number of repetitions $N=1000$ in Scenarios M1--M20. The sample size is $n=500$. VM represents the von Mises kernel density estimator (second-order). JF4 represents the fourth-order Jones and Foster kernel density estimators based on the von Mises kernel, respectively (previously introduced estimators).  WS and WT represent the wrapped sinc and wrapped trapezoid kernel density estimators, respectively (wrapped flat-top kernel density estimators). LSCV, FPI, DPI, and ER denote the smoothing parameter selector.
  % discussed in Section \ref{subsec: compare}.
  }
    \begin{tabular*}{\textheight}{@{\extracolsep\fill}crrrrrrrr}
         \toprule
          & \multicolumn{3}{c}{VM} & \multicolumn{1}{c}{JF4} & \multicolumn{2}{c}{WS} & \multicolumn{2}{c}{WT} \\
\cmidrule{2-4}\cmidrule{6-7}   \cmidrule{8-9}  No.   & \multicolumn{1}{c}{LSCV} & \multicolumn{1}{c}{FPI} & \multicolumn{1}{c}{DPI} & \multicolumn{1}{c}{LSCV} & \multicolumn{1}{c}{LSCV} & \multicolumn{1}{c}{ER} & \multicolumn{1}{c}{LSCV} & \multicolumn{1}{c}{ER} \\
    \midrule
    1     & 9.07(0.62) & 3.29(0.21) & 3.30(0.14) & 13.00(0.81) & 7.43(0.60) & 0.76(0.58) & 12.45(0.23) & 0.81(0.62) \\
    2     & 23.19(0.62) & 17.95(0.39) & 17.49(0.34) & 26.08(0.78) & 24.75(0.51) & 28.54(1.04) & 22.05(0.54) & 28.92(0.76) \\
    3     & 48.69(1.07) & 41.14(0.92) & 40.53(0.82) & 45.45(1.35) & 53.03(1.10) & 35.56(2.04) & 47.51(0.64) & 27.15(1.67) \\
    4     & 20.16(0.53) & 14.81(0.29) & 15.64(0.29) & 18.69(0.75) & 13.7(0.48) & 4.97(0.84) & 11.72(0.19) & 5.00(0.63) \\
    5     & 97.23(1.57) & 96.37(1.57) & 99.11(1.69) & 94.15(1.59) & 120.9(2.05) & 141.46(2.18) & 104.78(1.62) & 91.28(2.06) \\
    6     & 91.78(1.05) & 92.86(1.10) & 98.19(0.97) & 98.20(1.20) & 136.05(1.17) & 171.44(1.70) & 113.04(1.56) & 119.02(1.64) \\
    7     & 39.88(0.67) & 35.53(0.56) & 40.03(0.56) & 36.16(0.81) & 36.02(0.83) & 34.79(1.03) & 35.73(0.71) & 39.02(0.78) \\
    8     & 43.82(0.74) & 39.00(0.62) & 38.95(0.60) & 38.21(0.87) & 38.98(0.83) & 40.29(1.15) & 36.67(0.80) & 31.57(0.93) \\
    9     & 28.64(0.64) & 23.12(0.44) & 22.27(0.40) & 29.37(0.81) & 30.8(0.47) & 34.98(0.91) & 27.07(0.61) & 25.11(0.65) \\
    10    & 94.26(1.05) & 95.14(1.12) & 103.35(1.01) & 101.84(1.20) & 140.71(1.21) & 175.46(1.65) & 116.42(1.64) & 125.23(1.6) \\
    11    & 43.48(0.58) & 39.85(0.50) & 70.38(0.98) & 36.61(0.64) & 35.12(0.79) & 26.62(0.80) & 33.20(0.41) & 30.30(0.61) \\
    12    & 36.11(0.62) & 32.26(0.48) & 34.37(0.48) & 35.33(0.78) & 35.38(0.56) & 38.53(0.99) & 34.49(0.83) & 40.36(0.71) \\
    13    & 58.03(0.70) & 55.39(0.65) & 95.37(0.90) & 54.99(0.77) & 68.86(0.74) & 84.44(1.02) & 58.61(0.77) & 62.54(0.84) \\
    14    & 57.69(0.68) & 54.29(0.61) & 279.86(4.40) & 49.23(0.73) & 51.64(0.92) & 41.10(1.04) & 48.61(0.51) & 47.63(0.81) \\
    15    & 32.09(0.50) & 37.96(0.55) & 38.43(0.37) & 33.95(0.63) & 40.21(0.44) & 55.71(0.77) & 35.16(0.85) & 54.90(0.61) \\
    16    & 67.43(0.70) & 64.37(0.66) & 692.93(3.02) & 57.51(0.74) & 54.39(0.92) & 45.44(0.99) & 54.39(0.63) & 53.81(0.77) \\
    17    & 124.32(1.45) & 128.43(1.41) & 209.94(2.07) & 129.25(1.32) & 168.54(1.53) & 226.30(1.51) & 143.50(2.17) & 149.00(1.50) \\
    18    & 74.45(0.92) & 70.75(0.85) & 100.78(1.13) & 70.13(1.00) & 73.72(1.04) & 75.85(1.42) & 69.37(1.02) & 67.91(1.09) \\
    19    & 85.25(0.84) & 97.20(1.09) & 138.09(0.64) & 87.85(0.97) & 107.32(0.94) & 171.26(1.41) & 93.71(1.20) & 155.25(1.15) \\
    20    & 115.85(0.90) & 119.72(0.96) & 513.95(2.21) & 120.91(0.95) & 149.85(0.91) & 167.91(1.27) & 132.03(1.23) & 143.06(1.10) \\
    \bottomrule
    \end{tabular*}%
  \label{tab: result2}%
\end{sidewaystable}

\begin{sidewaystable}[htbp]
	\centering
	\caption{The mean or standard error of the integrated squared error $\times 10^{4}$ of the kernel density estimators based on the number of repetitions $N=1000$ in Scenarios M1--M20. The sample size is $n=200$. VM represents the von Mises kernel density estimator (second-order). JF4 represents the fourth-order Jones and Foster kernel density estimators based on the von Mises kernel, respectively (previously introduced estimators).  WS and WT represent the wrapped sinc and wrapped trapezoid kernel density estimators, respectively (wrapped flat-top kernel density estimators). LSCV, FPI, DPI, and ER denote the smoothing parameter selector.
	%discussed in Section \ref{subsec: compare}. 
	}
	\begin{tabular*}{\textheight}{@{\extracolsep\fill}crrrrrrrr}
		
		\toprule
		& \multicolumn{3}{c}{VM} & \multicolumn{1}{c}{JF4} & \multicolumn{2}{c}{WS} & \multicolumn{2}{c}{WT} \\
		\cmidrule{2-4}\cmidrule{6-7} \cmidrule{8-9}     No.   & \multicolumn{1}{c}{LSCV} & \multicolumn{1}{c}{FPI} & \multicolumn{1}{c}{DPI} & \multicolumn{1}{c}{LSCV} & \multicolumn{1}{c}{LSCV} & \multicolumn{1}{c}{ER} & \multicolumn{1}{c}{LSCV} & \multicolumn{1}{c}{ER} \\
		\midrule
	1     & 19.47(1.28) & 8.47(0.54) & 8.57(0.36) & 27.05(1.68) & 20.61(1.65) & 3.92(0.73) & 30.55(1.29) & 4.13(0.77) \\
	2     & 52.23(1.63) & 37.95(0.96) & 35.26(0.73) & 58.83(2.09) & 58.81(2.17) & 53.04(0.91) & 55.01(1.71) & 54.2(1.02) \\
	3     & 107.74(2.64) & 84.25(2.21) & 82.06(1.77) & 111.57(3.54) & 121.65(5.03) & 88.12(1.94) & 105.47(4.04) & 66.64(1.90) \\
	4     & 46.17(1.48) & 30.92(0.74) & 32.63(0.65) & 47.92(2.08) & 37.90(2.20) & 15.70(0.88) & 32.19(1.68) & 16.30(1.00) \\
	5     & 200.87(3.34) & 195.00(3.33) & 201.26(3.51) & 202.63(3.58) & 264.52(5.15) & 291.88(3.46) & 226.13(4.82) & 197.34(3.37) \\
	6     & 181.27(2.31) & 180.71(2.31) & 174.62(1.94) & 199.92(2.76) & 276.23(4.03) & 304.22(2.78) & 226.36(3.46) & 233.21(2.83) \\
	7     & 83.94(1.64) & 72.18(1.19) & 83.89(1.18) & 81.48(2.07) & 93.05(2.45) & 77.43(1.08) & 80.71(1.96) & 84.40(1.16) \\
	8     & 96.55(1.85) & 80.30(1.36) & 80.00(1.29) & 92.94(2.33) & 100.28(3.13) & 90.08(1.22) & 82.82(2.47) & 69.78(1.51) \\
	9     & 64.06(1.84) & 48.99(1.06) & 44.78(0.86) & 74.03(2.23) & 81.06(2.81) & 87.23(1.44) & 65.46(2.16) & 81.35(1.70) \\
	10    & 182.90(2.48) & 188.44(2.69) & 177.21(1.91) & 202.94(2.96) & 279.13(4.30) & 321.76(2.96) & 232.82(3.76) & 264.62(3.23) \\
	11    & 93.61(1.50) & 81.16(1.09) & 159.24(2.15) & 84.98(1.84) & 76.85(2.20) & 55.91(1.14) & 74.40(1.84) & 64.95(1.24) \\
	12    & 75.32(1.42) & 67.33(0.94) & 69.97(0.94) & 78.67(1.81) & 93.32(2.10) & 89.06(1.01) & 78.42(1.50) & 90.05(1.13) \\
	13    & 121.52(1.75) & 112.62(1.45) & 209.07(2.13) & 120.59(1.96) & 159.18(2.71) & 161.19(1.82) & 130.36(2.16) & 130.80(1.85) \\
	14    & 122.00(1.58) & 111.09(1.33) & 497.16(4.84) & 109.89(1.77) & 102.94(2.39) & 82.13(1.49) & 101.21(1.90) & 97.21(1.65) \\
	15    & 65.00(1.07) & 67.55(0.60) & 59.23(0.52) & 72.22(1.32) & 93.62(1.42) & 99.98(0.78) & 76.70(1.13) & 100.60(0.80) \\
	16    & 140.54(1.65) & 130.50(1.43) & 760.83(1.26) & 127.34(1.84) & 113.05(2.50) & 93.51(1.62) & 118.35(2.01) & 113.31(1.80) \\
	17    & 244.79(2.92) & 254.71(3.22) & 364.24(3.46) & 259.71(2.91) & 349.45(3.57) & 441.16(4.56) & 291.47(3.55) & 328.55(4.25) \\
	18    & 154.70(2.06) & 170.54(2.68) & 205.44(1.71) & 158.43(2.45) & 185.99(3.75) & 227.98(3.51) & 167.77(3.04) & 212.27(3.73) \\
	19    & 170.73(1.86) & 189.07(1.44) & 198.23(1.15) & 187.16(2.17) & 251.19(2.76) & 232.46(1.69) & 211.14(2.44) & 224.92(1.73) \\
	20    & 226.90(1.85) & 226.62(1.78) & 694.93(2.49) & 242.47(2.05) & 309.03(2.69) & 296.56(1.97) & 273.22(2.46) & 267.23(2.11) \\
		\bottomrule
	\end{tabular*}%
	\label{tab: result3}%
\end{sidewaystable}
\begin{sidewaystable}[htbp]
	\centering
	\caption{The mean or standard error of the integrated squared error $\times 10^{4}$ of the kernel density estimators based on the number of repetitions $N=1000$ in Scenarios M1--M20. The sample size is $n=100$. VM represents the von Mises kernel density estimator (second-order). JF4 represents the fourth-order Jones and Foster kernel density estimators based on the von Mises kernel, respectively (previously introduced estimators).  WS and WT represent the wrapped sinc and wrapped trapezoid kernel density estimators, respectively (wrapped flat-top kernel density estimators). LSCV, FPI, DPI, and ER denote the smoothing parameter selector.
	% discussed in Section \ref{subsec: compare}. 
	}
	\begin{tabular*}{\textheight}{@{\extracolsep\fill}crrrrrrrr}
		
		\toprule
		& \multicolumn{3}{c}{VM} & \multicolumn{1}{c}{JF4} & \multicolumn{2}{c}{WS} & \multicolumn{2}{c}{WT} \\
		\cmidrule{2-4}\cmidrule{6-7} \cmidrule{8-9}     No.   & \multicolumn{1}{c}{LSCV} & \multicolumn{1}{c}{FPI} & \multicolumn{1}{c}{DPI} & \multicolumn{1}{c}{LSCV} & \multicolumn{1}{c}{LSCV} & \multicolumn{1}{c}{ER} & \multicolumn{1}{c}{LSCV} & \multicolumn{1}{c}{ER} \\
		\midrule
	
	1     & 38.29(2.53) & 17.53(1.07) & 19.11(0.78) & 55.34(3.35) & 40.83(2.89) & 13.94(1.95) & 63.57(2.74) & 14.85(2.09) \\
	2     & 93.85(3.27) & 63.86(1.75) & 59.91(1.31) & 108.04(4.19) & 99.84(3.85) & 75.25(1.96) & 93.02(3.07) & 77.49(2.23) \\
	3     & 185.12(4.72) & 139.03(3.66) & 135.81(2.97) & 197.64(6.45) & 220.28(9.07) & 160.56(2.87) & 191.33(8.19) & 120.16(3.65) \\
	4     & 88.13(3.00) & 53.94(1.54) & 57.46(1.27) & 97.50(4.08) & 80.01(4.73) & 41.64(2.43) & 66.55(3.56) & 44.23(2.73) \\
	5     & 355.48(6.26) & 342.96(6.60) & 345.20(6.19) & 369.87(7.07) & 488.82(10.42) & 507.42(6.37) & 416.89(9.80) & 364.23(6.77) \\
	6     & 290.89(4.39) & 281.44(3.93) & 254.4(3.00) & 321.17(5.38) & 439.86(8.07) & 439.13(4.22) & 360.83(6.86) & 357.70(4.85) \\
	7     & 152.89(3.44) & 122.85(2.16) & 147.03(2.07) & 154.67(4.32) & 159.95(4.62) & 121.80(2.37) & 144.18(4.10) & 133.08(2.64) \\
	8     & 171.48(3.81) & 138.89(2.59) & 138.97(2.37) & 173.42(4.81) & 178.51(5.38) & 150.82(2.87) & 159.58(5.48) & 133.98(3.41) \\
	9     & 113.29(3.23) & 88.63(1.84) & 77.81(1.49) & 131.18(4.06) & 147.95(4.84) & 135.04(1.85) & 128.13(3.88) & 134.41(2.20) \\
	10    & 290.63(4.35) & 295.95(3.74) & 254.88(2.91) & 328.10(5.50) & 447.36(7.95) & 439.11(3.12) & 369.20(6.70) & 393.3(3.76) \\
	11    & 168.02(3.11) & 142.49(2.08) & 261.84(2.69) & 162.65(3.80) & 149.79(4.74) & 114.10(2.76) & 145.29(3.68) & 134.15(3.11) \\
	12    & 133.54(2.82) & 114.27(1.85) & 119.95(1.75) & 143.75(3.49) & 161.32(3.60) & 144.54(2.74) & 140.97(3.09) & 148.97(2.93) \\
	13    & 212.37(3.37) & 188.02(2.48) & 346.74(3.36) & 218.24(4.16) & 268.3(5.28) & 266.05(2.88) & 224.51(4.21) & 225.06(3.45) \\
	14    & 220.24(3.59) & 193.26(2.68) & 637.57(3.82) & 209.06(4.25) & 192.06(5.45) & 156.69(3.47) & 198.91(4.37) & 185.97(3.82) \\
	15    & 105.20(2.36) & 83.56(1.04) & 76.63(0.78) & 121.55(3.02) & 140.91(3.36) & 123.20(1.61) & 118.85(2.66) & 125.99(1.83) \\
	16    & 247.67(3.19) & 226.12(2.61) & 786.29(0.50) & 237.37(3.91) & 205.65(4.19) & 182.11(3.37) & 236.19(3.89) & 221.95(3.75) \\
	17    & 414.27(5.41) & 456.58(7.10) & 519.06(5.27) & 449.27(5.73) & 609.53(7.29) & 717.65(7.47) & 513.08(7.22) & 606.00(8.64) \\
	18    & 268.16(3.67) & 322.05(2.95) & 298.47(2.13) & 292.79(4.66) & 377.83(6.52) & 410.23(3.25) & 326.74(5.96) & 406.59(3.67) \\
	19    & 277.44(3.40) & 264.96(2.82) & 262.04(1.98) & 305.16(4.24) & 372.58(6.00) & 329.05(4.19) & 341.74(5.51) & 321.98(4.22) \\
	20    & 368.73(3.37) & 352.64(3.19) & 814.26(2.65) & 394.79(3.83) & 486.01(5.87) & 445.60(4.32) & 433.88(4.56) & 425.48(4.70) \\
		\bottomrule
	\end{tabular*}%
	\label{tab: result4}%
\end{sidewaystable}
\section{Application}\label{sec: application}
We present three empirical examples showing the utility of the wrapped sinc and trapezoid kernel density estimators with the ER selector.  Section \ref{sec: turtles} summarizes data on turtles. Section \ref{sec: wind} shows wind data.  Section \ref{sec: zebrafish} illustrates zebrafish data. These data are available from \texttt{R}.
\subsection{Turtles data}\label{sec: turtles}
The data on turtles are available as \texttt{fisherB3} in the \texttt{circular} library \citep{circular2022}. These data are the directions taken by 76 turtles after treatment \citetext{\citealp[]{stephens1969techniques}; \citealp[p. 241]{fisher1995statistical}}.  Figure \ref{fig: turtles} shows that the wrapped flat-top kernel density estimators represent the two-modal density for its data. \cite{mardia2000directional} note that the data indicate that the turtles have a preferred direction (toward the sea); however, a substantial minority seem to prefer the opposite direction. Hence, the  wrapped flat-top kernel density estimators effectively capture the data characteristics.

The wrapped flat-top kernel density estimators take negligibly low negative values at points which the data are coarse. For these negative values, the wrapped trapezoid kernel density estimator takes values closer to 0 than the wrapped sinc kernel density estimator. These negative values can be easily removed by $\max(0, \hat{f}_{\nu})$.  Therefore, they do not affect the interpretation of the data. To ensure that the estimator is a proper density, we can use the method of \cite{glad2003correction}, which adds an adjustment term to $\max(0, \hat{f}_{\nu})$ so that the integral value is 1.
 \begin{figure}[htbp]
    \begin{tabular}{cc}
        \begin{minipage}[t]{0.45\hsize}
        \centering
        \includegraphics[keepaspectratio, scale=0.45]{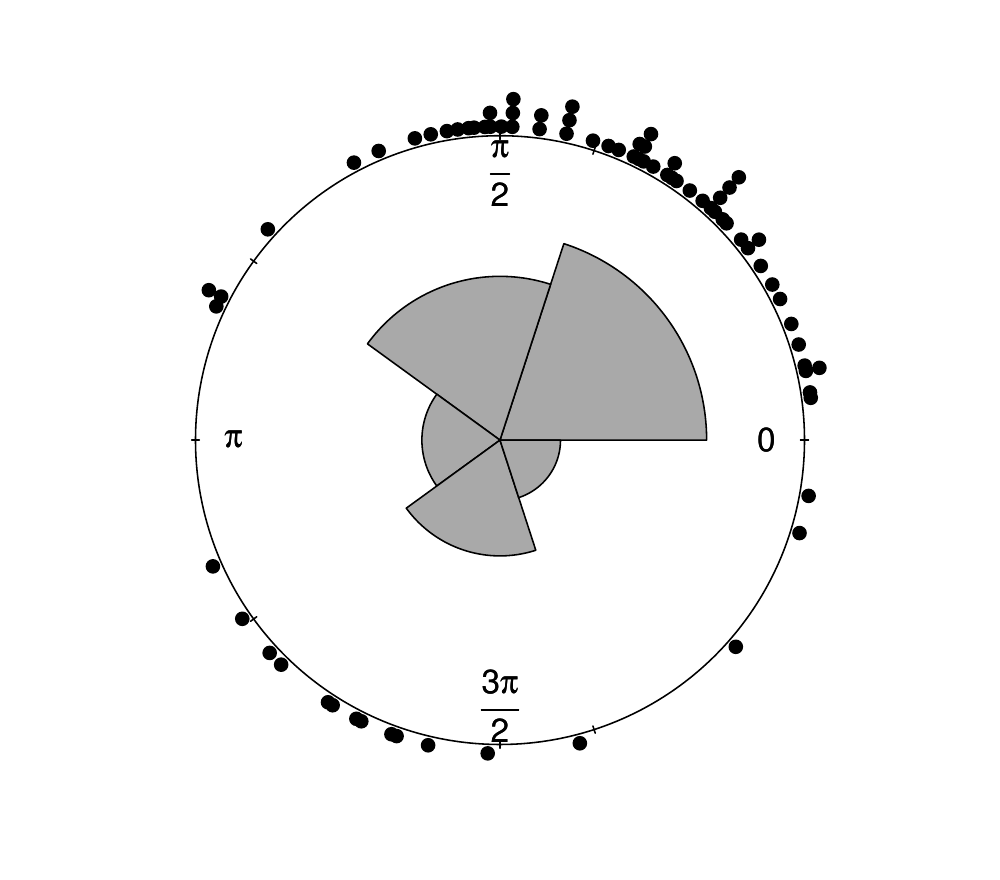}
        \subcaption{Circular plot and Rose diagram with bin width $\delta = 1.26$ estimated by the equal-split biased cross-validation proposed by \cite{tsuruta2023automatic}. }
        \label{fig: turtles_rose}
      \end{minipage}&
      \begin{minipage}[t]{0.45\hsize}
        \centering
        \includegraphics[keepaspectratio, scale=0.45]{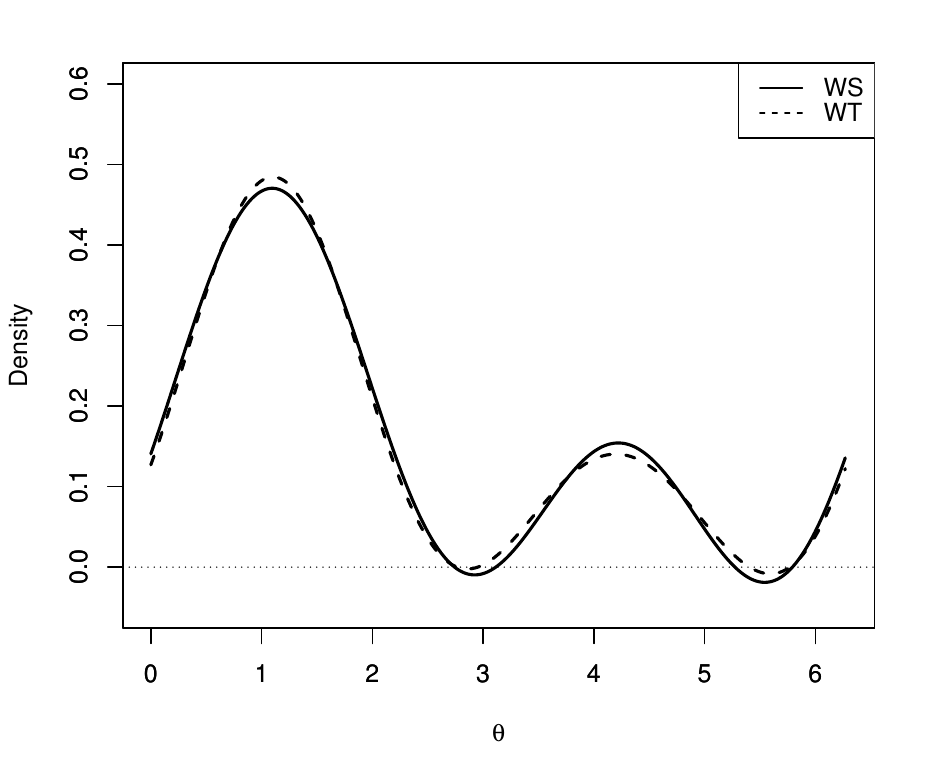}
        \subcaption{The solid line and dashed line represent the wrapped sinc and wrapped trapezoid kernel density estimators, respectively. Both smoothing parameters are $\nu = 2$, estimated by the ER selector. }
        \label{fig: turtles_density}
      \end{minipage} \\
    \end{tabular}
     \caption{Turtles data.}
\label{fig: turtles}
  \end{figure}
\subsection{Wind data}\label{sec: wind}
 We use wind data from \texttt{wind} in the \texttt{circular} library. The data consist of 310 observations recorded from January 29, 2001 to March 31, 2001, at the Col de la Roa weather station in the Italian Alps, with the values measured in radians, clockwise from the north \citep{circular2022}.
 
As shown in Figure \ref{fig: wind}, the wrapped flat-top kernel density estimators represent the dominant mode and many small modes. 
 The wrapped trapezoid kernel density estimator provides a density that has a larger peak and more moderate zigzags than the wrapped sinc kernel density estimator. Furthermore, it rarely takes negative values, although the wrapped sinc kernel density estimator sometimes takes negligibly low negative values.
 
 \begin{figure}[htbp]
    \begin{tabular}{cc}
        \begin{minipage}[t]{0.45\hsize}
        \centering
        \includegraphics[keepaspectratio, scale=0.40]{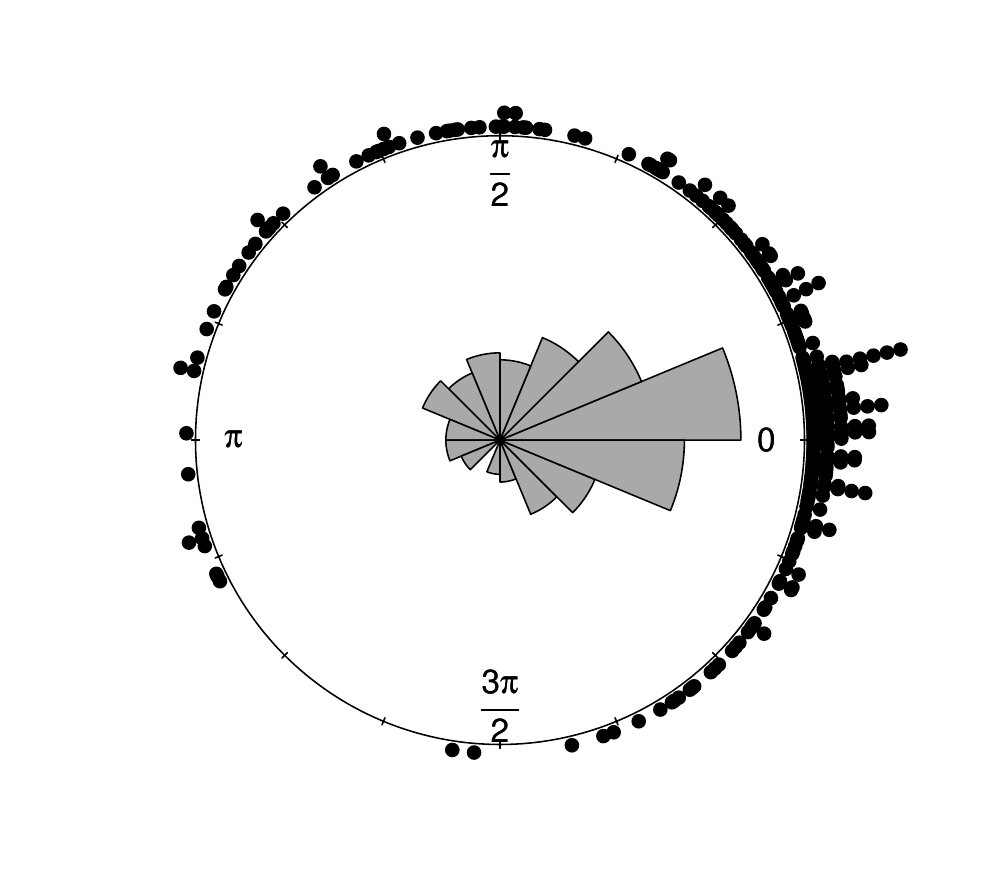}
        \subcaption{Circular plot and Rose diagram with bin width $\delta = 0.39$ estimated by the equal-split biased cross-validation. }
        \label{fig: wind_rose}
      \end{minipage}&
      \begin{minipage}[t]{0.40\hsize}
        \centering
        \includegraphics[keepaspectratio, scale=0.40]{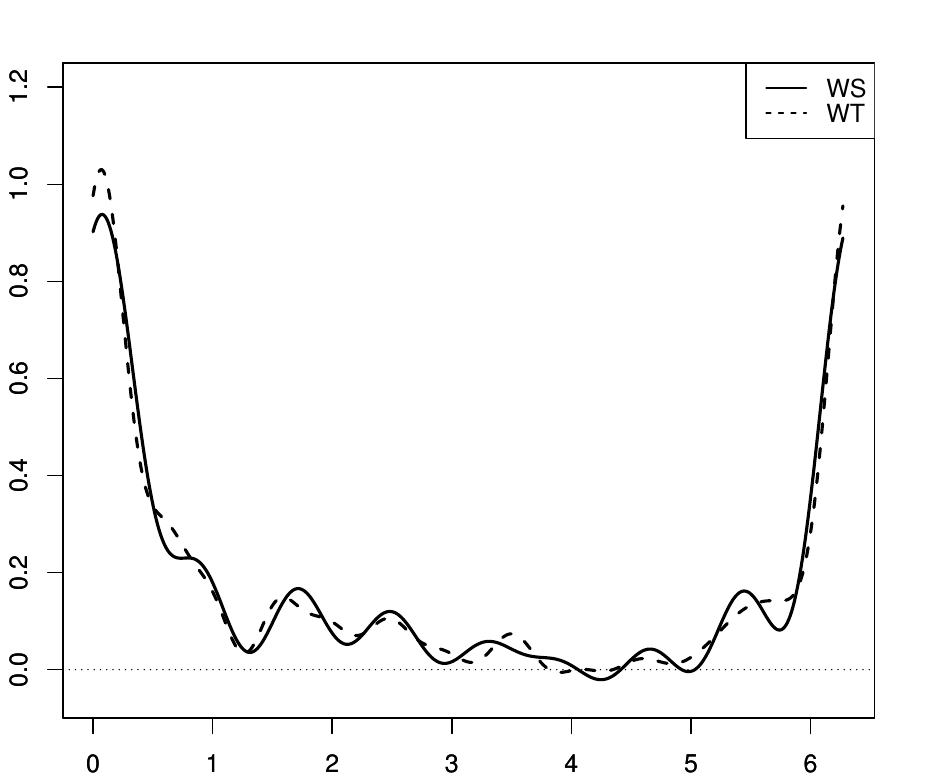}
        \subcaption{The solid line and dashed line represent the wrapped sinc and wrapped trapezoid kernel density estimator, respectively. Both smoothing parameters are $\nu = 8$,  estimated by the ER selector. }
        \label{fig: wind_density}
      \end{minipage} \\
    \end{tabular}
     \caption{Wind data.}
\label{fig: wind}
  \end{figure}
 \subsection{Zebrafish data} \label{sec: zebrafish}
 The zebrafish data are available as the \texttt{res\_angle} variable in \texttt{zebrafish} in the \texttt{NPCirc} library.  The dataset includes 502 observations corresponding to the directions of escape of each fish, obtained from an experimental study on larval zebrafish, which were startled by a predator imitated by a robot disguised as an adult zebrafish \citep{alonso2023analyzing}.
 
As shown in Figure \ref{fig: zebrafish}, the wrapped flat-top kernel density estimators represent a density that has a dominant mode and one small mode. They do not take negative values.  \cite{alonso2023analyzing} state that ``the larval zebrafish have one preferred direction of escape when being chased from a direction lateral to their bodies. However, when the zebrafish are pursued from the caudal or rostral sides of their bodies, there are two preferred directions of escape.'' The wrapped flat-top kernel density estimators effectively illustrate these characteristics.
 \begin{figure}[htbp]
    \begin{tabular}{cc}
        \begin{minipage}[t]{0.45\hsize}
        \centering
        \includegraphics[keepaspectratio, scale=0.40]{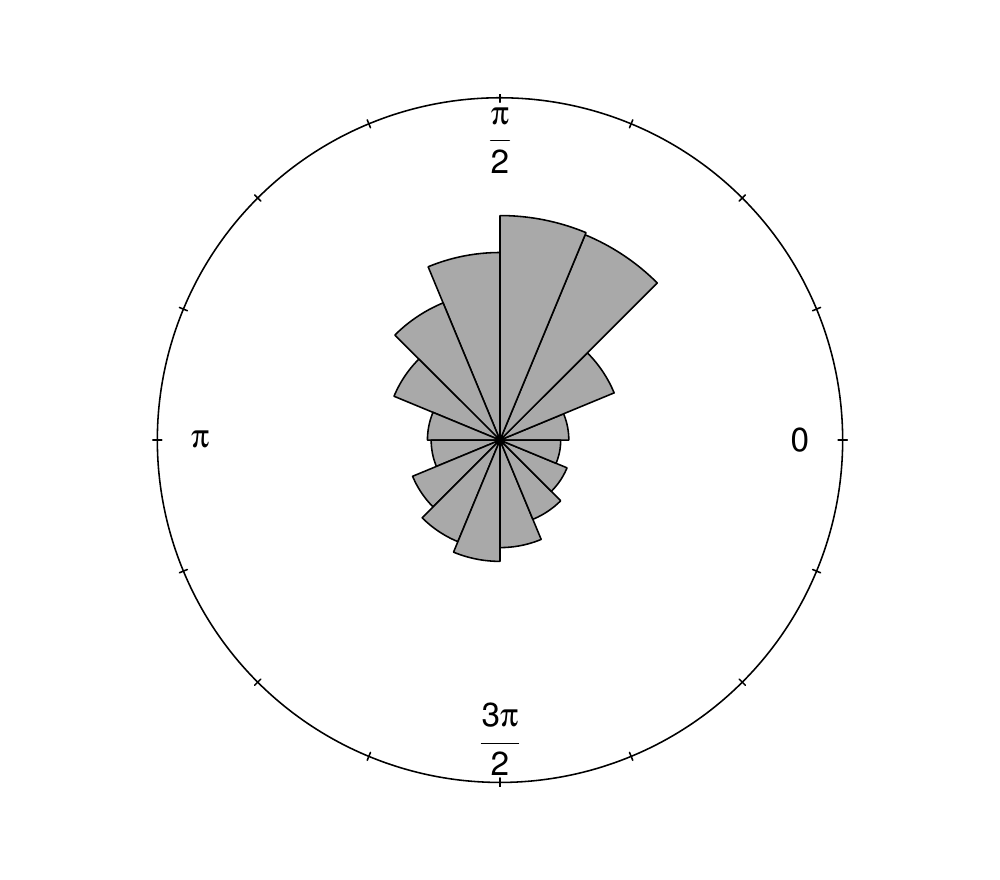}
        \subcaption{Circular plot and rose diagram with bin width $\delta =  0.39$ estimated by the equal-split biased cross-validation. }
        \label{fig: zebrafish_rose}
      \end{minipage}&
      \begin{minipage}[t]{0.40\hsize}
        \centering
        \includegraphics[keepaspectratio, scale=0.40]{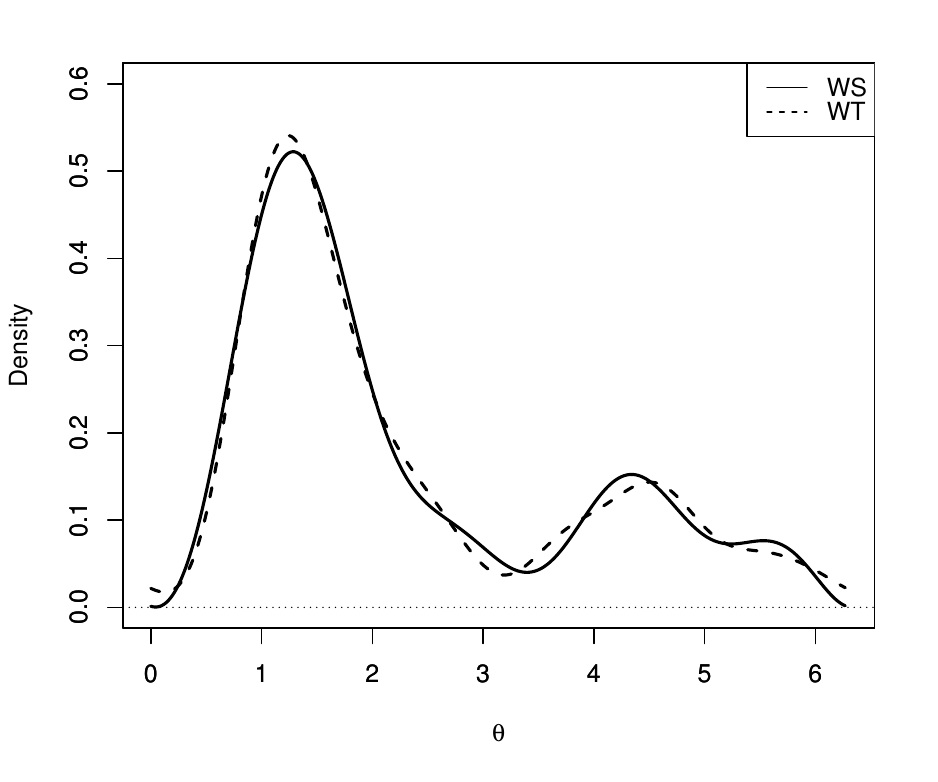}
        \subcaption{The solid line and dashed line represent the wrapped sinc and wrapped trapezoid kernel density estimator, respectively.  the both smoothing parameters are $\nu = 4$, estimated by the ER selector. }
        \label{fig: zebrafish_density}
      \end{minipage} \\
    \end{tabular}
     \caption{zebrafish data.}
\label{fig: zebrafish}
  \end{figure}
\section{Conclusion}\label{sec: conclusion}
This study proposed a wrapped flat-top kernel density estimator generated from a flat-top kernel function on the real line, namely, the wrapped sinc or trapezoid kernel density estimators. The wrapped flat-top kernel density estimators achieve $\sqrt{n}$-consistency when the characteristic function of the underlying circular density $f$ has compact support, such as for circular uniform and cardioid distributions. The wrapped flat-top kernel density estimators have a convergence rate of the MISE, which is $O((\log{n})^{1/\alpha} n^{-1})$ under a squared value of a characteristic function of $f$ exponential decay with parameter $\alpha$, such as for the wrapped normal ($\alpha =2$) and wrapped Cauchy ($\alpha =1$) distributions. These estimators also have a convergence rate of the MISE, which is $O(n^{-2r/(2r+1)})$ under a squared value of a characteristic function of $f$ decay at the speed of $t^{2r}$, such as $r$ times continuously differentiable and circular triangular density.  

In summary, wrapped flat-top kernel density estimators have two advantages over the previously introduced estimator. They have a faster convergence rate of the MISE and are more flexible because they can accurately estimate densities that are not smooth, such as non-differentiable and uniform densities. 

We confirm the asymptotic properties derived in the theoretical section through numerical experiments. We find that the wrapped flat-top kernel density estimators outperform the previously introduced estimators under the well-used simulation setting of circular non-parametric estimators. Specifically, the wrapped trapezoid kernel density estimator with the ER selectors is the most effective. However, the results depend on the selection of the smoothing parameter. In empirical analyses, we find that the wrapped flat-top kernel density estimators with the ER selectors illustrate the characteristics of the data, although the wrapped trapezoid kernel density provides more appropriate results than the wrapped sinc kernel density estimator.

These results show that the estimation accuracy of wrapped flat-top kernel density estimators is higher than that of the previously introduced estimators. Therefore,  wrapped flat-top kernel density estimators are expected to allow flexible and accurate estimation in circular data analysis. However, they rarely take on negative values. The empirical analysis indicates that no undesirable properties are observed when the sample size is larger than 500. Furthermore, we correct the negativity of these using $\max(0, \hat{f}_{\nu}(\theta))$.

Future studies could extend the use of these wrapped flat-top kernel density estimators to multivariate spaces, such as a torus and cylinder. Furthermore, they could investigate the properties of wrapped flat-top kernel density estimators for skewed circular and related directional distributions
 \citep{abe2011sine, ley2017skew}.  

%Furthermore, asymptotic properties of the smoothing parameter selectors, such as the ER selector.

%\bmhead{Supplementary information}
%This paper is accompanied by an \texttt{R} code file, entitled “code\_WFTKDE.R”, which implements
%the wrapped sinc and wrapped trapezoid kernel density estimators, together with the ER.
%selector.
%If your article has accompanying supplementary file/s please state so here. 

%Authors reporting data from electrophoretic gels and blots should supply the full unprocessed scans for key as part of their Supplementary information. This may be requested by the editorial team/s if it is missing.

%Please refer to Journal-level guidance for any specific requirements.

\section*{Acknowledgements}
This work was supported by JSPS KAKENHI Grant Numbers JP20K19760 and JP24K20746.
%Acknowledgements are not compulsory. Where included they should be brief. Grant or contribution numbers may be acknowledged.

%Please refer to Journal-level guidance for any specific requirements.

\begin{appendices}
\section{Proofs}\label{secA1}
\begin{proof}[Proof of Theorem \ref{theo: MISE}]
By Parseval's formula, the MISE is given by
\begin{align}
\MISE[\hat{f}_{h}(\theta)] 
&= \frac{1}{2\pi}\sum_{t\in \mathbb{Z}}\E[| \phi_{t}(f) - \phi_{t}(\hat{f}_{h})|^{2}]\notag\\
&=  \frac{1}{2\pi}\sum_{t \in \mathbb{Z}}\left(|\phi_{t}(f)|^{2} - \bar{\phi_{t}}(f)\E[\phi_{t}(\hat{f}_{h})] - \phi_{t}(f)\E[\bar{\phi}_{t}(\hat{f}_{h})] + \E[|\phi_{t}(\hat{f}_{h})|^{2}]\right), \label{eq; MISEpf}
\end{align}
where $\phi_{t}(\hat{f}_{h})= n^{-1}\phi_{t}(K_{h})\sum_{j}e^{it\Theta_{j}}$.
We obtain
\begin{align}
\bar{\phi_{t}}(f)\E[\phi_{t}(\hat{f}_{h})] &= \bar{\phi_{t}}(f)\frac{1}{n}\sum_{j=1}^{n}\phi_{t}(K_{h})\E[e^{it\Theta_{j}}]= \phi(K_{h}) |\phi_{t}(f)|^{2}, \label{eq: A1}\\
\phi_{t}(f)\E[\bar{\phi}_{t}(\hat{f}_{h})] & = \bar{\phi}(K_{h}) |\phi_{t}(f)|^{2} , \label{eq: A2}\\
\shortintertext{and}
\E[|\phi_{t}(\hat{f}_{h})|^{2}] &= \E\left[\frac{1}{n}|\phi_{t}(K_{h})|^{2} + \frac{1}{n^{2}}\sum_{j \neq k} |\phi_{t}(K_{h})|^{2}e^{it(\Theta_{j} - \Theta_{k})} \right]\notag\\
&= \frac{1}{n} |\phi_{t}(K_{h})|^{2} + \frac{n(n-1)}{n^{2}}|\phi_{t}(K_{h})|^{2}|\phi_{t}(f)|^{2}. \label{eq: A3}
\end{align}
Substituting \eqref{eq: A1}, \eqref{eq: A2}, and  \eqref{eq: A3} to the right-hand side of \eqref{eq; MISEpf} yields
\begin{align}
\MISE[\hat{f}_{h}(\theta)]&= \frac{1}{2\pi}\left[\sum_{t \in \mathbb{Z}}|\phi_{t}(f)|^{2}|1 - \phi_{t}(K_{h})|^{2} + \frac{1}{n}\sum_{t \in \mathbb{Z}}|\phi_{t}(K_{h})|^{2}(1 - |\phi_{t}(f)|^{2})\right]. \label{eq: MISEpf1}
\end{align}
\end{proof}
 \begin{proof}[Proof of Theorem \ref{theo: IVwftk}]
 We obtain 
 \begin{align}
 R(K_{\nu, c}^{2}) &= \frac{1}{2\pi}\sum_{|t|< c\floor{\nu}}|\phi_{t}(K_{\nu,c})|^{2}\notag\\
 &= \frac{1}{2\pi}\{1 + 2\floor{\nu} +  \sum_{\floor{\nu} < |t| < c \floor{\nu}}g(t; \nu, c)^{2}\}\notag\\
 &\le \frac{1 + 2(c\floor{\nu}-1)}{2\pi}\notag\\
 &\le \frac{c\nu}{\pi}.\label{eq: Rwftk}
 \end{align}
 Combining \eqref{eq: IV0} and \eqref{eq: Rwftk} yields
 \begin{align}
 \IV[\hat{f}_{\nu,c}(\theta)]&\le \frac{c\nu}{n\pi} - \frac{1}{2\pi n}\sum_{0 < |t| < c\nu}|\phi_{t}(f)|^{2}|\phi_{t}(K_{\nu,c})|^{2}\notag\\
 &\le \frac{c\nu}{n\pi}.
 \end{align} 
 \end{proof}
 \begin{proof}[Proof of Theorem \ref{theo: rMISE}]
 Combining \eqref{eq: ISB0} and Condition C1) yields
\begin{align}
\ISB[\hat{f}_{\nu,c}(\theta)]&\le\frac{1}{2\pi}\frac{1}{\nu^{2r}}\sum_{|t|>\nu}\nu^{2r}|\phi_{t}(f)|^{2}\notag\\
&\le \frac{1}{2\pi}\frac{1}{\nu^{2r}}\sum_{|t|>\nu}t^{2r}|\phi_{t}(f)|^{2}\notag\\
&=  \frac{1}{2\pi}\frac{1}{\nu^{2r}}\sum_{t\in \mathbb{Z}}t^{2r}|\phi_{t}(f)|^{2} \left( 1- \frac{\sum_{|s|\le\nu}s^{2r}|\phi_{s}(f)|^{2}}{\sum_{s\in \mathbb{Z}}s^{2r}|\phi_{s}(f)|^{2}} \right)\notag\\
&= \frac{1}{\nu^{2r}}C_{r}(f)\epsilon(\nu), %\label{eq: ISBp}
\end{align}
where 
\begin{align*}\epsilon(\nu):=  1- \frac{\sum_{|s|\le\nu}s^{2r}|\phi_{s}(f)|^{2}}{\sum_{s\in \mathbb{Z}}s^{2r}|\phi_{s}(f)|^{2}} .
%, \quad \epsilon(\nu) \in (0, 1).
\end{align*}
 \end{proof}
\begin{proof}[Proof of Theorem \ref{theo: expMISE}]
Combining \eqref{eq: ISB0} and Condition C2) yields
\begin{align}
\ISB[\hat{f}_{\nu,c}(\theta)]&\le\frac{1}{2\pi}e^{-\tau|\nu|^{\alpha}}\sum_{|t|>\nu}e^{\tau|\nu|^{\alpha}}|\phi_{t}(f)|^{2}\notag\\
&\le \frac{1}{2\pi}e^{-\tau|\nu|^{\alpha}}\sum_{|t|>\nu}e^{\tau|t|^{\alpha}}|\phi_{t}(f)|^{2}\notag\\
&=  \frac{1}{2\pi}e^{-\tau|\nu|^{\alpha}}\sum_{t\in \mathbb{Z}}e^{\tau|t|^{\alpha}}|\phi_{t}(f)|^{2} \left( 1- \frac{\sum_{|s|\le\nu}e^{\tau|s|^{\alpha}}|\phi_{s}(f)|^{2}}{\sum_{s\in \mathbb{Z}}e^{\tau|s|^{\alpha}}|\phi_{s}(f)|^{2}} \right)\notag\\
&=e^{-\tau|\nu|^{\alpha}}I_{\alpha,\tau}\epsilon_{2}(\nu), %\label{eq: ISBp}
\end{align}
where 
\begin{align*}\epsilon_{2}(\nu):= 1- \frac{\sum_{|s|\le\nu}e^{\tau|s|^{\alpha}}|\phi_{s}(f)|^{2}}{\sum_{s\in \mathbb{Z}}e^{\tau|s|^{\alpha}}|\phi_{s}(f)|^{2}} .
%, \quad \epsilon_{2}(\nu) \in (0, 1).
\end{align*}
\end{proof}

%%=============================================%%
%% For submissions to Nature Portfolio Journals %%
%% please use the heading ``Extended Data''.   %%
%%=============================================%%

%%=============================================================%%
%% Sample for another appendix section			       %%
%%=============================================================%%

%% \section{Example of another appendix section}\label{secA2}%
%% Appendices may be used for helpful, supporting or essential material that would otherwise 
%% clutter, break up or be distracting to the text. Appendices can consist of sections, figures, 
%% tables and equations etc.

\end{appendices}

%%===========================================================================================%%
%% If you are submitting to one of the Nature Portfolio journals, using the eJP submission   %%
%% system, please include the references within the manuscript file itself. You may do this  %%
%% by copying the reference list from your .bbl file, paste it into the main manuscript .tex %%
%% file, and delete the associated \verb+\bibliography+ commands.                            %%
%%===========================================================================================%%
\bibliographystyle{abbrvnat}
\bibliography{WFTKDE_ST}% common bib file
%% if required, the content of .bbl file can be included here once bbl is generated
%%\input sn-article.bbl

\end{document}